\pgfplotsset{compat=1.3}
\newtheorem{theorem}{Theorem}
\newtheorem{lemma}{Lemma}
\newcommand{\myalpha}{\tilde \alpha}
\newcommand{\mysumrate}{generalized normalized sum-rate}
\begin{document}

\title{Leveraging Physical Layer Capabilites: Distributed Scheduling in Interference Networks with \\Local Views}

\author{Pedro~E.~Santacruz,~\IEEEmembership{Member,~IEEE,}
        Vaneet~Aggarwal,~\IEEEmembership{Member,~IEEE,}
        and~Ashutosh~Sabharwal,~\IEEEmembership{Senior~Member,~IEEE}
        }

\maketitle

\begin{abstract}
\boldmath
In most wireless networks, nodes have only limited local information about the state of the network, which includes connectivity and channel state information. With limited local information about the network, each node's knowledge is mismatched; therefore, they must make distributed decisions. In this paper, we pose the following question - if every node has network state information only about a small neighborhood, how and when should nodes choose to transmit? While link scheduling answers the above question for point-to-point physical layers which are designed for an interference-avoidance paradigm, we look for answers in cases when interference can be embraced by advanced PHY layer design, as suggested by results in network information theory.

To make progress on this challenging problem, we propose a constructive distributed algorithm that achieves rates higher than link scheduling based on interference avoidance, especially if each node knows more than one hop of network state information. We compare our new aggressive algorithm to a conservative algorithm we have presented in \cite{santacruz:2013}.  Both algorithms schedule sub-networks such that each sub-network can employ advanced interference-embracing coding schemes to achieve higher rates. Our innovation is in the identification, selection and scheduling of sub-networks, especially when sub-networks are larger than a single link.
\end{abstract}

\begin{IEEEkeywords}
Distributed scheduling, graph coloring, normalized sum-rate, interference channel, local view.
\end{IEEEkeywords}

%
\IEEEpeerreviewmaketitle

\section{Introduction}
\label{Intro}
\IEEEPARstart{T}{he} shared nature of wireless communication networks results in the fundamental problem of dealing with interference from other simultaneous transmissions by co-located flows. The most commonly used technique of managing interference is to avoid it by scheduling transmissions such that the co-located flows do not transmit simultaneously.  Link scheduling inherently assumes that the underlying physical layer architecture is designed to decode a single packet. Link scheduling, both centralized and distributed, has a rich history and continues to be an active area of research, see \cite{tassiulas:1992, gupta:2000, lin:2005, sharma:2006, wu:2007, chaporkar:2008} and references therein.  We pose and study the scheduling problem for the case when the physical layer architecture can embrace interference by using advanced coding methods.

While interference-avoidance  continues to be the near de-facto strategy in wireless networks, it has been known for some time that avoiding interference is not a capacity maximizing strategy for many networks. For example, techniques like multi-user detection \cite{verdu:1998}, Han-Kobayashi coding for 2-user interference channel \cite{han:1981}, and interference alignment for general interference networks \cite{cadambe:2008} are known to yield higher capacity by embracing, not avoiding, interference; e.g., see the book-length exposition~\cite{elgamal:2011} for details. These new ideas have also inspired new standardization activity like Coordinated Multipoint (CoMP)~\cite{irmer:2011}, which uses network MIMO to improve capacity at the edge of the cells. However, almost all such advanced techniques assume extensive knowledge about the network topology, channel statistics, and in many cases, instantaneous channel information, to achieve capacity gains from embracing interference~\cite{elgamal:2011}.  A direct impact of requiring such extensive knowledge at each node is that the resulting network architecture poses scalability limitations -- as the network size grows, the amount of network state information needed at every node also grows proportionally with the number of users in the network.

In this paper, we pose the following problem. If each node in the network has limited information about the network state (connectivity and channel states), i.e., it only knows the network state information about channels and links $h \geq 1$ hops away from it, then what is the capacity maximizing transmission scheme. Note that limited local information problems have been extensively studied in distributed scheduling~\cite{lin:2005, sharma:2006, wu:2007, chaporkar:2008}. However, as mentioned above, all of them assume interference avoidance as their underlying architecture. In our model, the physical layer architecture is not restricted a priori and is allowed to be any feasible scheme, including those which embrace interference~\cite{elgamal:2011}. However, unlike network information theory formulations \cite{han:1981, cadambe:2008, sato:1981}, we are explicitly studying only scalable architectures by limiting network state information at each node. 

The new problem turns out to be extremely hard, and is the generalized version of the distributed capacity problem studied recently in~\cite{aggarwal,aggarwal:2012}.  The formulation in~\cite{aggarwal,aggarwal:2012} shared full network connectivity information with all nodes but assumed only $h$-hop information about the channel state, where $h$ is less than the network diameter. The key (and surprising) result in~\cite{aggarwal} was that a generalized form of scheduling is information-theoretically optimal for many networks. The general scheduling, labeled Maximum Independent Graph Scheduling (MIG Scheduling), schedules \emph{connected sub-networks} larger than a single link, particularly when $h >1$, i.e., nodes know more than one hop of channel information. The size of the sub-networks is such that within each sub-network, an advanced coding scheme can be used since all nodes have enough channel information to operate optimally. Since MIGS assumed that the nodes have full connectivity information, nodes across different sub-networks can coordinate their time of transmission. Thus MIGS is akin to centralized scheduling, but of connected sub-networks potentially bigger than a single link. 

In our work, no global connectivity information is available at any node and thus all decisions have to be truly distributed. To make progress on the challenging new network capacity problem, we use MIGS as our starting point and focus on how sub-networks can be \emph{identified}, \emph{selected} and \emph{scheduled} in a distributed manner.  We began exploring this problem in \cite{santacruz:2013} where we presented a distributed sub-network scheduling algorithm that included a selection step that conservatively ensured that performance gains could be guaranteed.  In this work, we continue our contributions by presenting a new distributed sub-network scheduling algorithm which is shown in simulation to achieve better network performance compared to interference-avoidance, especially when more than one-hop of network state information is made available at each node.

The first step is the identification of sub-networks which can operate optimally with limited local information. That is, if the rest of the network was switched off beyond a small sub-network, there exists enough information to use an advanced interference-embracing coding scheme which will achieve the maximum possible capacity in that sub-network. 
We identify a set of sub-graphs which we denote as $\rho$-cliques (defined later), where $\rho$ depends on the amount of available network information.  By limiting our attention to finding only $\rho$-cliques, we ensure that each sub-network by itself can operate optimally with limited local network view at each node and still guarantee the necessary condition of sub-networks to be used by MIGS.


In the second step, we select a subset of identified $\rho$-cliques in order to maximize network sum-rate. The challenge, like in any distributed problem, is for nodes to reach consensus \emph{locally} such that the global rate is optimized. Towards that end, we propose a new selection algorithm labeled \emph{aggressive} and compare it with our previous algorithm in \cite{santacruz:2013}, which we label \emph{conservative}. Both selection algorithms prune the identified $\rho$-cliques in order to decrease the maximum degree of the graphs made from the identified $\rho$-cliques, which is directly related to increasing the network sum-rate. Thus, the two algorithms use only local graph properties in their decision making. The conservative algorithm is guaranteed to produce schedules which will never achieve rates below interference-avoiding link scheduling, but ends up making conservative decisions for many networks. The aggressive algorithm does not provide provable guarantees but is shown to achieve better sum-rates for several network classes. The last step of scheduling sub-networks is performed using Kuhn's local multicoloring algorithm~\cite{kuhn:2009}, applied to the more general graph structure induced by sub-networks. 

To summarize, the contributions of our work are threefold.  The first contribution is associated with our problem formulation.  While previous work has looked into network performance using normalized sum-rate \cite{aggarwal, aggarwal:2012, kanes}, the formulation in that literature has assumed that only channel state information is locally available and connectivity information is available globally to all users in the network.  In this work we remove the assumption of globally available connectivity information and formulate the problem to characterize a more general form of normalized sum-rate with $\eta$ hops of channel information and $\tau$ hops of connectivity information.  

The second contribution is the design of a new constructive distributed sub-network scheduling algorithms to improve normalized sum-rate performance using local network views.  The new algorithm is labeled \emph{aggressive} and addresses the limitations of the \emph{conservative} algorithms proposed in \cite{santacruz:2013}.  The proposed one-shot algorithm is based on simple heuristics and uses $\eta$ hops of channel state information and $\tau$ hops of connectivity information.  

Finally, our third contribution is the performance analysis of the conservative and aggressive algorithms.  We present a mathematical characterization of the algorithms' performance in terms of normalized sum-rate.  In \cite{santacruz:2013}, we showed that the performance of the conservative sub-network scheduling algorithm is guaranteed to be a non-decreasing function of the number of hops of information available to each node.  In this paper we show through simulations that the aggressive algorithm achieves significant normalized sum-rate performance gains in several important network classes.  Also, to provide a more practical measure of net throughput performance, we compare performance bounds of our algorithms against distributed coloring and maximal scheduling algorithms and show that our proposed aggressive algorithm can provide significant gains in terms of net sum-rates.

The rest of the paper is organized as follows:  in Section~\ref{sec:Related} we discuss the related work, in Section \ref{sec:Model} we present the system model and formalize the problem formulation, and in Section \ref{sec:Subnetwork} we introduce an overview of the distributed sub-network scheduling algorithm.  Sections \ref{sec:Step1}, \ref{sec:Step2}, and \ref{sec:Step3} describe Step~1, Step~2, and Step~3 of the proposed distributed algorithm, respectively.  In Section \ref{sec:Results} we present the results and analysis comparison of both the conservative and aggressive algorithms we have developed and we conclude our paper in Section \ref{sec:Conclusion}.

\section{Related Work}
\label{sec:Related}

Interference management in wireless networks has been widely studied at both the network and physical layers.  From the networking point of view, prior research has focused on developing schemes and algorithms that reduce the computational complexity and allow optimal throughput strategies to be executed in a distributed fashion.  At the physical layer, the focus has been set on finding schemes that aim to approach network capacity.  Approaches at both the networking and physical layer have acknowledged and tried to address the issue of local information in some form or another.  In this section, we review some of the works related to characterizing the effects of local information in wireless networks.  

\subsection{Interference Avoidance: Link Scheduling}
In the seminal work by Tassiulas and Ephremides~\cite{tassiulas:1992}, the capacity of a constrained queueing system for an interference-avoiding physical layer was derived and characterized.  The problem was solved by finding a maximum-weight independent set of nodes in the network graph in each transmission time-slot.  While this scheme yields optimal throughput under the interference-avoidance paradigm, it requires complete information about the connectivity and queue states of the network by all members of the network or, alternatively, a centralized entity that computes the optimal schedule and communicates it to the members of the network in each time-slot.  Furthermore, even when complete information is available, the optimization problem that needs to be solved has high complexity.  The need for complete information and the high computation complexity of the optimal solution make maximum weight-scheduling infeasible for most practical purposes, especially in wireless scenarios with time-varying topologies.  There has been a significant amount of work on reducing the amount of network information required at each node, developing algorithms that are implementable in a distributed manner, and reducing the computational complexity of the algorithms.  Some prior work aims to maximize performance in terms of utility \cite{jiang:2010IT, lee:2009, yi:2008} while others are concerned with improving throughput.  

One set of link scheduling algorithms consists of algorithms that aim to produce \emph{Maximal Matchings} at each time-slot \cite{chaporkar:2008}, \cite{lin:2005}, \cite{sharma:2006}, \cite{wu:2007}.    Also, there are the so-called \emph{Pick-and-compare} policies where, at each time-slot, the current schedule serves as a building block for the next schedule, \cite{sanghavi:2007} and \cite{modiano:2006}.  Both of these sets of algorithms require global knowledge about the connectivity of the network, or at least some predetermined global ordering that allows nodes to avoid conflicts during algorithm execution. In addition, the long convergence time to the optimal schedule can result in increased delay.  


Another class of algorithms under the paradigm of interference-avoidance are the policies that require a \emph{constant-time overhead}, \cite{lin:2006}, \cite{gupta:2009}, and \cite{joo:2009}.  These policies are all based on separating each time frame into a scheduling time-slot and data transmission time-slot.  The design of these algorithms presents an explicit tradeoff between performance and overhead.  Constant-time overhead algorithms also include those that employ Carrier Sense Multiple Access/Collision Avoidance (CSMA/CA), \cite{marbach:2008}, \cite{jiang:2010}, and \cite{ni:2012}.  Finally, more recent work, \cite{joo:2012} and \cite{leconte:2011}, has developed algorithms that introduce the idea of network locality in the process of scheduling.  These algorithms provide \emph{local greedy scheduling} schemes that approximate Greedy Maximal Scheduling \cite{lin:2005} with nodes in the network using only information about themselves and their neighbors.

All algorithms described in this section, and other approaches not categorized here \cite{baker:1982, shah:2002, eryilmaz:2005, sharma:2006Allerton, sarkar:2006, zussman:2008, ying:2009, wu:2010} have the predetermined assumption that receivers in the network decode an incoming transmission successfully only if none of the other links within reception range transmit concurrently.  Some of the implementation issues of the optimal solution presented in \cite{tassiulas:1992} are addressed by the works described above, yet the underlying interference-avoidance physical-layer architecture remains.  The algorithms presented in the works described provide more practical and more easily implementable schemes that can guarantee a fraction of the performance of the scheduling capacity region described in \cite{tassiulas:1992}, but leave the possibility of leveraging advanced physical layers with local information open for exploration.

\subsection{Beyond Interference Avoidance}
Interference-avoidance strategies are often not the optimal approach from a sum-capacity perspective and developments in network information theory describe advanced coding techniques that achieve higher sum-capacity \cite{elgamal:2011}.  In fact, the capacity region of a network with interference-avoidance as the assumed physical layer must lie completely inside the capacity region where all physical layer techniques are available.  Naturally, the major drawback of these results is the need for complete (or almost complete) network information, including connectivity and channel information.  

Several works have addressed this prohibitive requirement by analyzing the performance of networks with only limited information.  In \cite{jafar:2013}, the author considers an interference channel where transmitters do not have channel information and only connectivity information is available.  Network performance under different example topologies is analyzed in terms of degrees of freedom (DoF).  Techniques and results used in the problem of wired networks with linear network coding are applied to the wireless problem.  The work in \cite{naderia:2013} produces results in a similar scenario that assumes connectivity information but no channel information is available to the transmitters.  A study of the capacity region for the 2-user interference channel when each transmitter knows only a subset of the channel gains in the network is presented in \cite{kao:2011}.  These results along with the work in \cite{aggarwal} motivate the argument that it is possible to use advanced coding techniques with limited local knowledge.  We will use the work in \cite{aggarwal} as the launchpad for our work and expand it to capture a more detailed analysis of the impact of local information on network performance.

\section{System Model and Problem Formulation}
\label{sec:Model}

\subsection{Network Model}
In this work, we consider a wireless interference network consisting of $N$ source-destination pairs.  Each source-destination pair is considered a user in the network.  The source nodes, labeled $S_i$, are connected to a subset of the destination nodes, labeled $D_j$, ($i,j \in \{1, ..., N\}$) if the received power at destination $D_j$ from $S_i$ is above some threshold. The set of transmitters that are connected to receiver $D_j$ is labeled $\mathcal I_j$ and receiver $D_j$ is only interested in messages from transmitter $S_j$.  All transmissions from $S_i$, where $i \neq j$, is considered interference at $D_j$.  We assume there is always a connection between $S_i$ and $D_j$, for all $i=j$.  The channel gain between $S_i$ and $D_j$ is denoted $H_{ji}$. The received signal at receiver $D_j$ is 

\begin{equation}
Y_j =\sum_{i: S_i \in \mathcal I_j} H_{ji} X_i + W,
\end{equation}
where $X_i$ is the transmit signal from $S_i$ subject to its average power constraint $P_i$ and $W$ is complex Gaussian noise $\mathcal{CN}(0,1)$.

Associated with the interference network, there is a \emph{conflict graph} $G(V,E)$.  In this conflict graph, a vertex $v \in V$ represents a user in the interference network and an edge $e \in E$ represents interference between those two users, that is, $\{i,j\} \in E$ if $S_i$ is connected to $D_j$ or if $S_j$ is connected to $D_i$.  Figure~\ref{fig:intnetwork} depicts an example interference network and its corresponding conflict graph is illustrated in Figure~\ref{fig:conflictgraph}.  It is important to note that, since our conflict graph is undirected, there are several interference networks that result in the same conflict graph.  While all our results hold for arbitrary conflict graphs, we will use the conflict graph in Figure \ref{fig:conflictgraph} as an illustrative example in the rest of the paper.  Hence, we label the $N$-node graph (in Figure~\ref{fig:conflictgraph}) with a line of size $N-3$ and an attached clique of size 3 as the \emph{line-clique} graph.

\begin{figure}[htb]
\centering
\includegraphics[width=.13\textwidth]{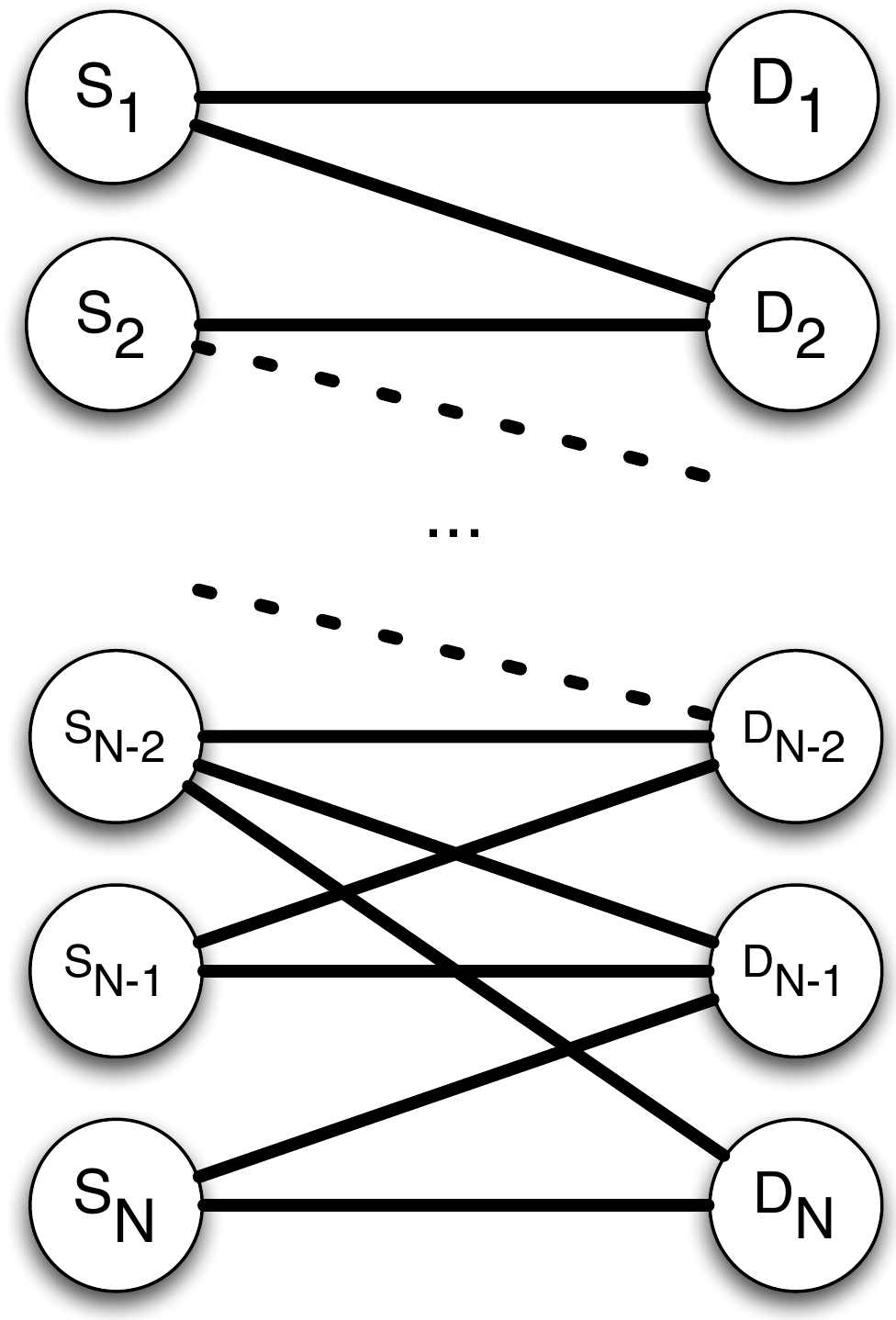}
\caption{Interference Network}.
\label{fig:intnetwork}
\end{figure}

\begin{figure}[htb]
\centering
\includegraphics[width=.3\textwidth]{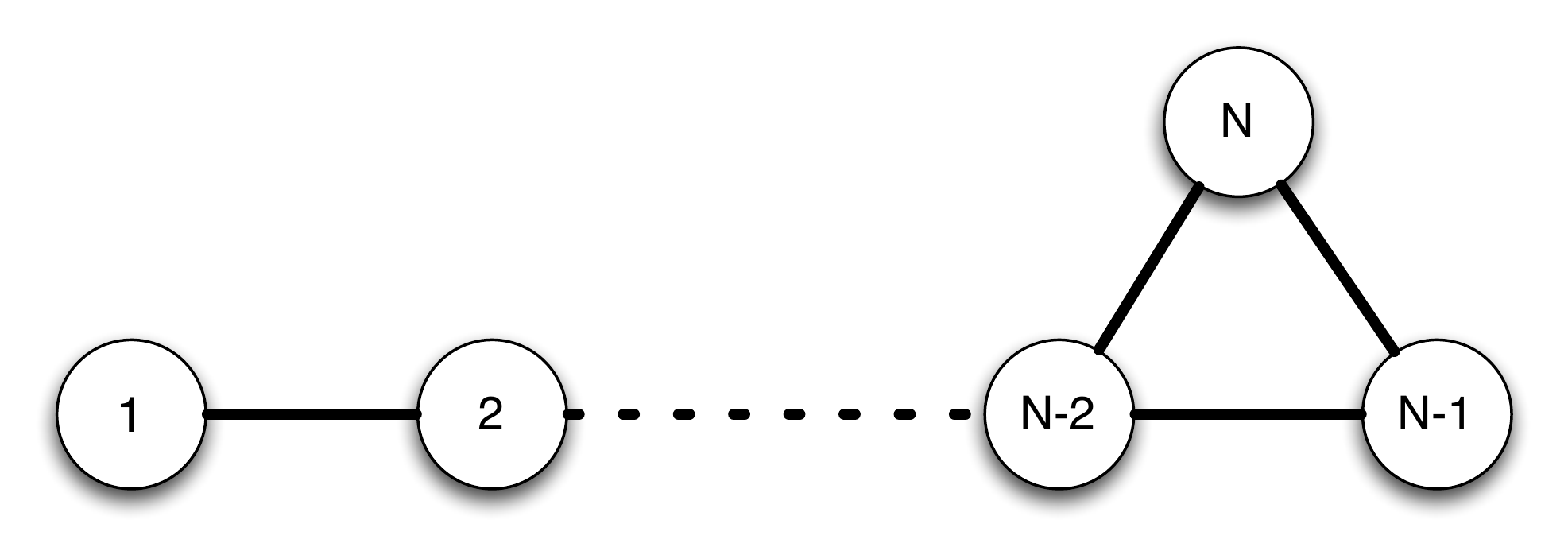}
\caption{Conflict Graph, $G$, associated with Interference Network in Fig. \ref{fig:intnetwork}.  We label it as the \emph{line-clique} graph.}
\label{fig:conflictgraph}
\end{figure}

We note some of the differences between our model and the network model commonly known as the $k$-hop interference model in the network scheduling literature (\cite{lin:2005,lin:2006,sharma:2006,wu:2007} and others).  In the $k$-hop interference model, each node in the interference network can be both a transmitter and a receiver while in our model each node has an assigned role.  Since in our model interference only occurs if a transmitter directly interferes with another receiver, our model is more closely related to the 1-hop, or node-exclusive, interference model.  All traffic in our model is assumed to be single-hop traffic and there exist links in the interference network that are not data flows (users) in the network, i.e., connections between $S_i$ and $D_j$, for $i\neq j$, do not represent data flows, yet a connection exists and interference behavior is inherited from the interference network to the conflict graph.

\subsection{Local View} In~\cite{aggarwal}, local knowledge at any node was modeled as $h$-hops of channel information with that node as the center. However, all nodes were assumed to have full connectivity information. We will study a more general model of network information, giving each node only limited network connectivity and channel state knowledge, as described below. For convenience, we describe the quantified amount of knowledge in the context of the conflict graph representation of the network.

A user is said to have $\tau$ hops of connectivity information if it knows all vertices and edges $\tau$ hops away from it in the conflict graph $G$.  Similarly, a user has $\eta$ hops of channel information if it has knowledge of all channel gains in the interference network for all users $\eta$ hops away in the conflict graph.  Notice that $\eta$ hops of channel knowledge in conflict graph equals  $h = 2\eta+1$ hops of channel knowledge in the interference network.  The same holds true for connectivity information.  Also, we assume that $\tau \geq \eta$ since, in general, channel information is more difficult to obtain than connectivity information.

\subsection{Normalized Sum-rate} Our metric of network performance is a slight modification of the \emph{normalized sum-rate}, $\alpha$, introduced in \cite{aggarwal}, which is the information-theoretic sum-rate achieved normalized by the sum-capacity with full network state information.  
More precisely, a normalized sum-rate of $\alpha(\eta)$ with $\eta$ hops of channel state information in the conflict graph is said to be achievable if there exists a strategy that allows transmission at rates $R_i$ for each flow $i \in \{1, 2, ..., N\}$ with error probabilities going to zero, and satisfying

\begin{equation}
\sum_{i = 1}^{N} R_i \geq \alpha(\eta) C_{sum} - \epsilon
\end{equation}
for all topologies consistent with the local view information, regardless of the realization of the channel gains.  Here $C_{sum}$ is the sum capacity of the network with full information and $\epsilon$ is some non-negative constant independent of channel gains.

In this work, we extend the concept of normalized sum-rate by removing the assumption that full connectivity information is available at every node.  Instead, we quantify both the hops of channel \emph{and} connectivity information available at each node by $\eta$ and $\tau$, respectively.  We propose as our metric of performance the more general form of normalized sum-rate, denoted $\myalpha(\eta, \tau)$, as a function of $\eta$ and $\tau$, such that the following is satisfied 

\begin{equation}
\sum_{i = 1}^{N} R_i \geq \myalpha(\eta, \tau) C_{sum} - \epsilon.
\end{equation}
The problem we address in this work is to characterize the achievable performance in a network with $\eta$ hops of available channel information and $\tau$ hops of available connectivity information.  We tackle this problem by creating schemes that use only $(\eta, \tau)$ hops of information and analyzing their performance.

\section{Sub-network Scheduling}
\label{sec:Subnetwork}

Inspired by the work in \cite{aggarwal}, we generalize the proposed strategy called Maximal Independent Graph (MIG) Scheduling which is information-theoretically optimal for various classes of networks. Information-theoretic optimality means that there exist no other physical layer coding strategy which can achieve higher sum-rates given the amount of knowledge available. In MIG Scheduling, every node in the interference network knows the complete topology of the network and each node is assumed to have $h$ hops of channel information.  The network is separated into sub-networks that can achieve a normalized sum-capacity of 1 with the $h$ hops of channel information (i.e., sub-networks with enough local knowledge to simultaneously transmit in an optimal way).  Thus, the important result from~\cite{aggarwal} is that a generalized form of network scheduling is information-theoretically optimal in many cases. Our approach finds simple distributed algorithms that use only local connectivity information \emph{and} local channel information to find key sub-networks and implement this generalized form of network scheduling.

In MIG Scheduling, the sub-networks that are able to achieve normalized sum-capacity of 1 are labeled \emph{independent sub-graphs}.  MIG scheduling divides the network into $t$ independent sub-graphs, $\mathcal A_1,\ldots, \mathcal A_t$ (not all distinct, for some $t$), and each user $i$ belongs to $d_i$ independent sub-graphs.  An important result in \cite{aggarwal} is that by dividing the network into independent sub-graphs, it is possible to achieves a normalized sum-rate of 

\begin{equation}
\alpha(h) = \min_{i \in 1, 2, ... N}\frac{d_i}{t}.
\end{equation}
The set of independent sub-graphs, $\mathcal A_1,\ldots, \mathcal A_t$, that maximizes the value of $\alpha(h)$ is called the MIG schedule.

The problem of finding the MIG schedule for an arbitrary network is a difficult task, even with complete connectivity knowledge, and particularly challenging with only local connectivity information at each node.  The optimal independent sub-graphs are only known for few topologies and small number of users.  In this work, to answer our posed capacity problem, we focus on the scheduling of independent sub-graphs in a distributed fashion with only $(\tau,\eta)$-hops of knowledge about network state.

We give a general overview of our algorithm to distributedly find independent sub-graphs as required by Independent Graph Scheduling.  In the several next sections we propose an \emph{aggressive} algorithm that is similar to the \emph{conservative} algorithm presented in \cite{santacruz:2013}.  Both algorithms consist of three major steps: 1) identification, 2) selection, and 3) scheduling.  In the first step, we use the available channel and connectivity information to identify all sub-networks of diameter at-most $\rho$ such that each sub-network independently achieves a normalized sum-capacity of 1.  In the second step, we strategically select a subset of these sub-networks.  The selected subset of sub-networks will be the only sub-networks that will be transmitting.  Finally, in the third step we arrange several of these connected sub-networks into independent sub-graphs that still achieve normalized sum-capacity of 1.  The creation of independent graph is done by using a distributed coloring algorithm that assigns a single color to groups of sub-networks.  

Our algorithms are parametrized by $\rho$, which is the maximum diameter of the connected sub-networks being identified.  Given a $\rho$, we assume that each node has at least $\eta = \rho+1$ hops of channel knowledge and either $\tau = 3\rho+3$ or $\tau =3\rho+1$ hops of connectivity information, depending on the algorithm.  For simplicity, we also denote the \mysumrate, $\myalpha(\eta, \tau)$, by a single parameter, $\rho$, and use the symbol $\myalpha(\rho)$.


In Step 1, we leverage the local knowledge available at each node by finding $r$-cliques for $r\le\rho$ which is defined as follows:  

{\mydef An $r$-clique in a graph $G = (V, E)$ is a subgraph, $G[S]$, induced by a subset of nodes $S \subset V$ that satisfies the following three conditions:

\begin{enumerate}
\item Every node in $G[S]$ is at most a distance of $r$ hops away from all other nodes in $G[S]$.
\item The diameter of $G[S]$ is $r$.
\item There is no $S' \subset V$ that also satisfies Conditions 1 and 2 and such that $S \subset S'$.  In other words, $G[S]$ is a maximal subgraph.
\end{enumerate}
}
Note that a single node is a graph by itself and a $0$-clique according to the above definition.

Step 1 consists of identifying $r$-cliques, for $r = 0,...,\rho$, in the conflict graph, $G$.  After the $r$-cliques have been identified, Step 2 consists of selecting a subset of the identified $r$-cliques and consolidating the selected $r$-cliques into single vertices to generate a consolidated graph, $G_\rho$, where each vertex represents an $r$-clique, $r = 0,...,\rho$, from the conflict graph, $G$.  An edge exists between two vertices in the consolidated graph if there exists an edge between members of the two cliques in the original conflict graph.

Step 3 of the general procedure is performed by applying the distributed multicoloring algorithm by Kuhn \cite{kuhn:2009} to the consolidated graph, $G_\rho$, which results in the assignment of time slots to each one of the cliques.  The set of cliques with the same color are defined as an \emph{independent clique set}.  An independent clique set achieves $\alpha(\eta) = 1$ because each clique achieves $\alpha(\eta)=1$ and the cliques do not interfere with each other.  When we assign a time slot to each of the independent clique sets we create a scheme for Independent Graph Scheduling.  We have chosen Kuhn's multicoloring algorithm because it requires only one round of communication and ensures that each node in the graph being colored receives at least a fraction $1/(\Delta +1)$ of the total colors assigned.  We note that our metric of normalized sum-rate is directly related to the time slots assigned to the worst-case user~\cite{aggarwal}. Thus, given a fixed number of cliques containing a specific node, it is desirable to use the multicoloring algorithm in consolidated graphs which have smaller maximum degree, $\Delta$.

Finally, we note that to derive our results we do not need to state the form of optimal coding methods used by each node in the identified sub-networks. The fact that we can analyze sum-rate without explicitly defining coding methods is possible due to our choice of normalized sum-rate as a metric. 

\section{Step 1: Identification ($G \rightarrow G^-_\rho(v)$)}
\label{sec:Step1}

We begin the formal description of our algorithm by describing in detail the procedure followed in the identification step.  Consider a conflict graph $G$ and a parameter $\rho$.  We assume each user in the network has $\eta = \rho +1$ hops of channel information and $\tau = \rho +1$ hops of connectivity information.  In Step~1, for a given $\rho$, each node identifies $r$-cliques, $r = 0,..., \rho$; this identification can be accomplished with $\rho+1$ hops of connectivity information.  We are interested in these $r$-cliques because, with the available channel information, it is ensured that each $r$-clique can achieve $\myalpha(\rho) = 1$.  These potential $r$-cliques are the candidates to ultimately be represented by a vertex in the consolidated graph $G_\rho$.

Since each node has a different local view of the conflict graph, $G$, the potential cliques discovered by each node will be different. Thus, in Step 1, each node will generate a temporary graph where the potential cliques it sees are turned into vertices.  We will denote the temporary graph from the point to view of node $v$ as $G^-_\rho(v) = (W^-(v), F^-(v))$, which is described as follows.  The set of vertices $W^-(v)$ represents all the $r$-cliques ($r\le \rho$) in the part of the graph known to node $v$ with $\rho+1$ hops of connectivity information.  Each node in $w \in W^-(v)$ maps to a set of nodes in the original conflict graph; we denote that set of nodes in the conflict graph represented by vertex $w$ as $nodes(w)$.  An edge exists between two vertices in $G^-_\rho(v)$, $w_1$ and $w_2$, if there is an edge between a member of $nodes(w_1)$ and a member of $nodes(w_2)$ in $G$ or if a member of $nodes(w_1)$ is also a member of $nodes(w_2)$.

Consider an example to illustrate the construction $G \rightarrow G^-_\rho(v)$ with the parameter $\rho =1$ using the example original conflict graph $G$ in Figure \subref*{fig:linecliquecliqueid}.

\begin{figure}[h]
	\begin{minipage}{.49\linewidth}
	\centering

	\subfloat[Line-clique graph with $N$ nodes, $G$ \label{fig:linecliquecliqueid}]{%
      	\includegraphics[width=\textwidth]{figs/examplecg.pdf}
    	}

	\end{minipage}	
	\begin{minipage}{.49\linewidth}
	\centering
    	\subfloat[$G^-_1(1)$ \label{fig:cliqueidn1}]{%
      	\includegraphics[width=\textwidth]{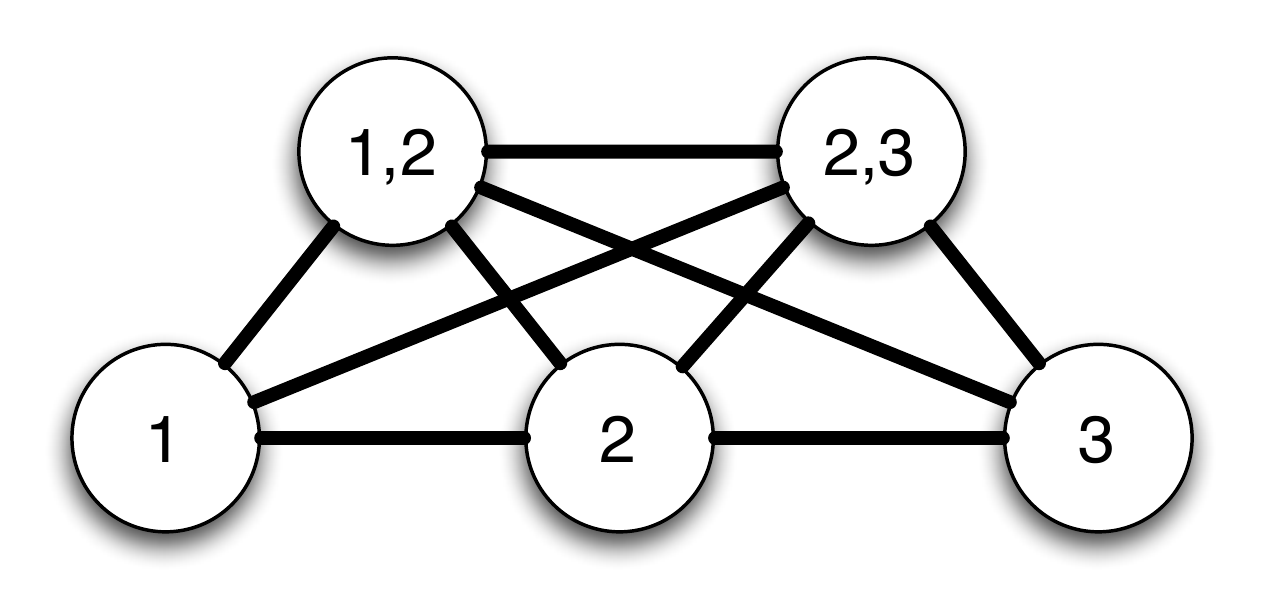}
    	}

	\end{minipage}
	\par\medskip
        \begin{minipage}{.49\linewidth}
	\subfloat[$G^-_1(2)$ \label{fig:cliqueidn2}]{%
      	\includegraphics[width=\textwidth]{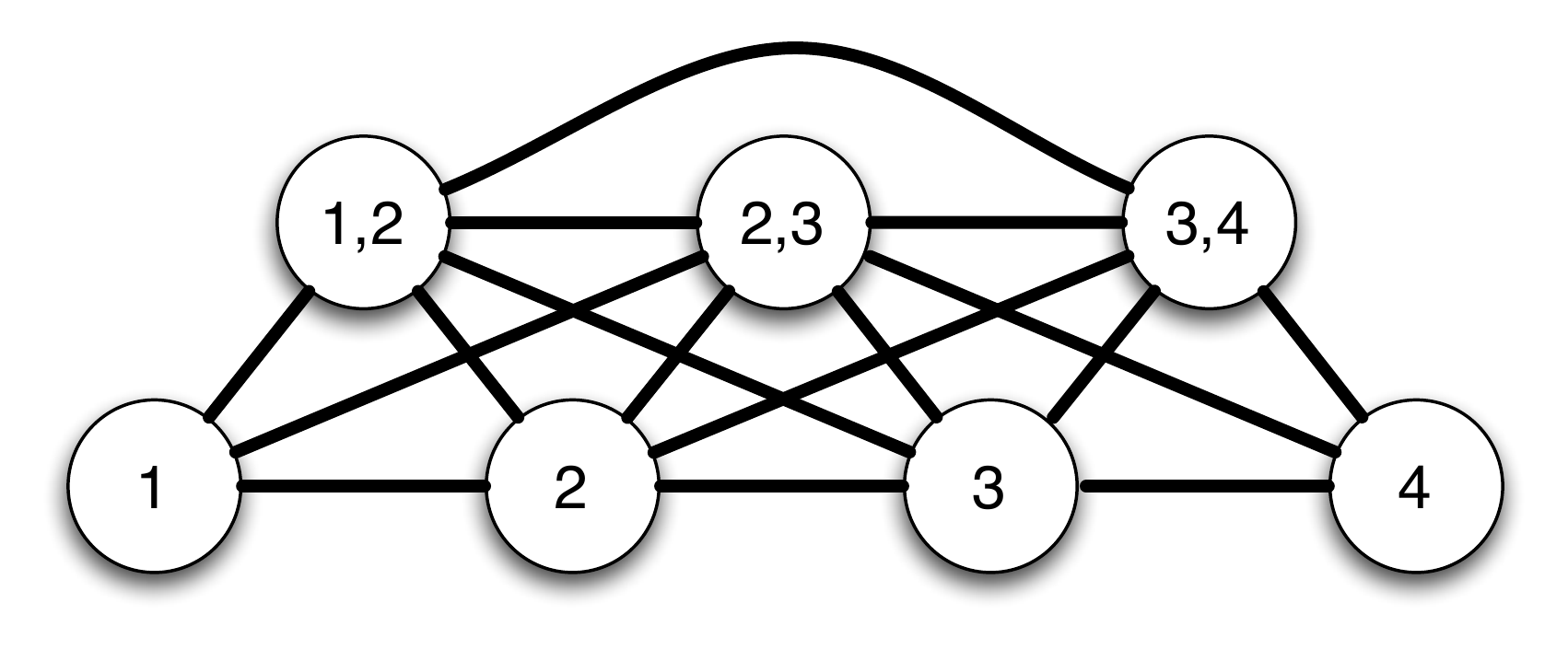}
    	}

	\end{minipage}
	\begin{minipage}{.49\linewidth}
	\subfloat[$G^-_1(N)$ \label{fig:cliqueidnN}]{%
      	\includegraphics[width=\textwidth]{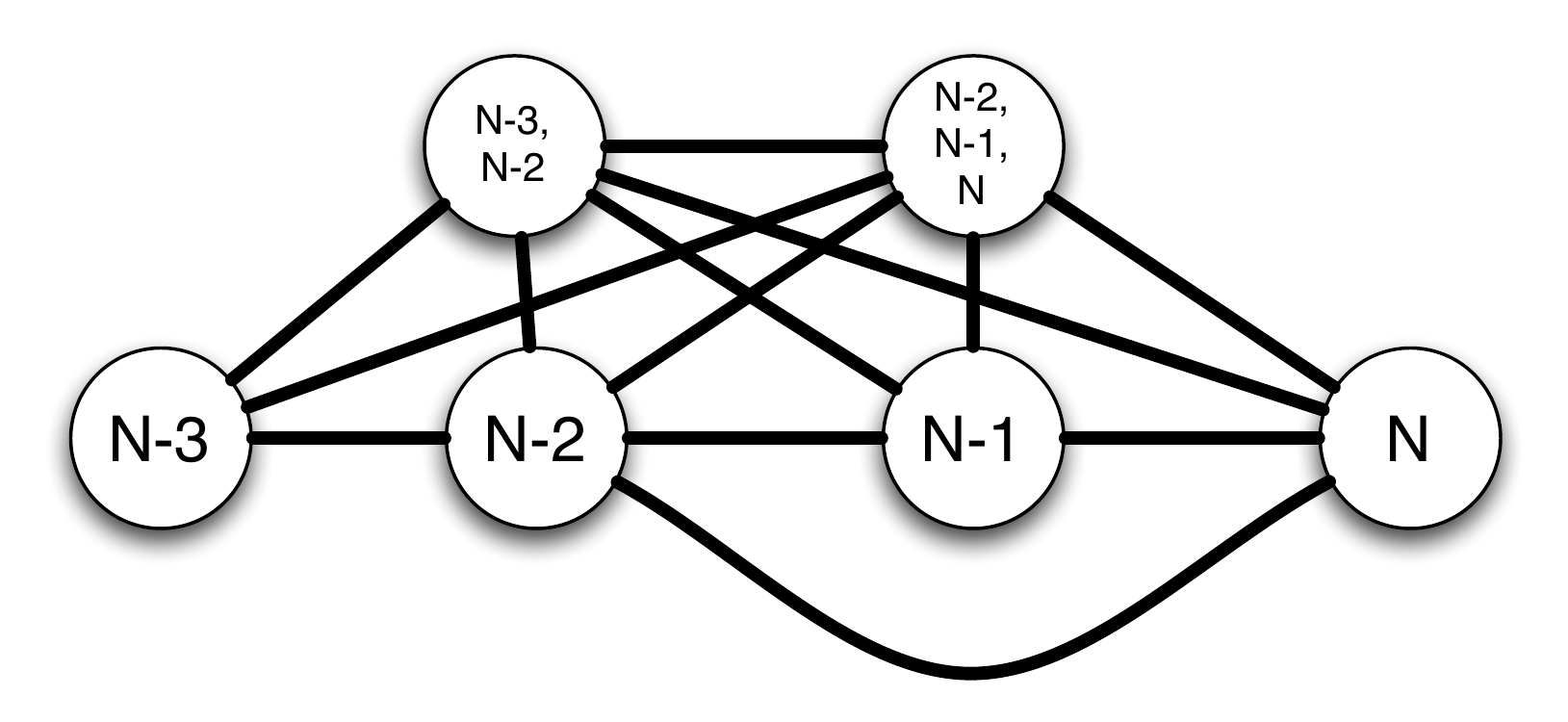}
    	}

	\end{minipage}

\caption{Example of Step 1: Clique Identification for nodes 1, 2, and N with $\rho = 1$}
\label{figs:step1}
\end{figure}

%
%
%


Figure \subref*{fig:cliqueidn1} shows what the graph of all potential vertices looks like from the point of view of Node 1, which has 2 hops of connectivity knowledge.  The vertices are labeled according to their corresponding set of nodes from the original conflict graph (in other words, the label of node $w$ is $nodes(w)$).  As we can see, Node 1 observes 5 potential cliques, three 0-cliques ($\{1\},\{2\}, \{3\}$) and two 1-cliques ($\{1,2\}, \{2,3\}$).  There exists an edge between the vertices labeled $\{1\}$ and $\{1,2\}$ because $\{1\}$ is present in both vertices and because there is an edge between Node 1 and Node 2 in the original conflict graph.  Similarly, there are edges between $\{1, 2\}$ and $\{2\}$ and between $\{1\}$ and $\{2\}$ and so on.

Figure \subref*{fig:cliqueidn2} depicts the graph $G^-_1(2)$.  In this case, Node 2 in $G$ sees four 0-cliques: $\{1\}$, $\{2\}$, $\{3\}$, and $\{4\}$.  Also, Node 2 sees three different 1-cliques $\{1, 2\}$, $\{2, 3\}$ and $\{3, 4\}$ that could be formed so there is a total of 7 vertices in the graph $G^-_1(2)$. The edges are generated following the rules explained earlier.  Similarly, Figure \subref*{fig:cliqueidnN} describes the graph $G^-_1(N)$.  In this last case, notice that the vertex $\{N-2, N-1, N\}$ was created since it forms a 1-clique and node $N$ belongs to it.  An important point is the exclusion of the sets $\{N-1, N\}$ and $\{N-2, N\}$.  These sets are not included as vertices in $G^-_1(N)$ because we have defined an $r$-clique as a maximal subset and both $\{N-1, N\}$ and $\{N-2, N\}$ are 1-cliques superseded by the 1-clique $\{N-2, N-1, N\}$, therefore $\{N-2, N-1, N\}$ is the only set that becomes a vertex in $G^-_1(N)$.


The clique identification process is easily extended for any $\rho >1$ by identifying all $r$-cliques, for $r = 0, ... , \rho$.  For example, if $\rho = 2$, $G^-_2(v)$ would consist of the full $G^-_1(v)$ plus all the $2$-cliques in the 2-neighborhood of $v$, along with their respective edges.  As we have mentioned, a larger $\rho$ would increase the minimum amount of information required at each node.  Also, finding maximal $r$-cliques is in general a hard problem, but since our goal is to leverage local information, we primarily concentrate on the cases of small $r$.

Up to this point, we have identified cliques that are made up of nodes that can transmit simultaneously in an optimal manner with the available local knowledge. However, in order to assign time slots to each one of these cliques, there are two problems that need to be addressed.  First, there is the issue that each node now has a graph with a maximum degree which is significantly higher the maximum degree of the original conflict graph.  Due to the use of Kuhn's algorithm as our scheduling step, the maximum graph degree and normalized sum-rate achieved by our scheme are intimately related, so our goal is a consolidated graph, $G_\rho$, with small degree.  The second issue is the fact that we need to ensure that a distributed coloring algorithm does not lead to coloring conflicts especially since the graphs seen by different nodes differ so much from each other.  In the next section we will select which vertices of the graph $G^-_\rho(v)$ should remain and which should be pruned to reduce the maximum degree of the final consolidated graph, $G_\rho$, and to ensure that there will be no conflict in the use of a distributed coloring algorithm.

\section{Step 2: Selection ($G^-_\rho(v) \rightarrow G_\rho(v)$)}
\label{sec:Step2}

In this section, we will describe the selection step, which consists of selecting which of the potential vertices in $G^-_\rho(v)$ identified by each node in Step 1 will be pruned and which will be kept in their own view of the final consolidated graph $G_\rho(v)$.  In this work, we propose a different approach to Step 2 than the one presented in \cite{santacruz:2013}.  The approach presented previously is a conservative selection algorithm that allows each user to be represented by only one vertex in the consolidated graph and ensures that the normalized sum-rate of the network never decreases.  The alternative approach presented here allows each user to be represented by more then one vertex in the consolidated graph, and while strict guarantees cannot be provided, the gains in normalized sum-rate are significant in some important classes of graphs, especially with small amounts of local knowledge.  We first highlight the key characteristics of the conservative selection algorithm and then proceed to describe the aggressive selection algorithm in detail.  


\subsection{Conservative Selection Algorithm}
\label{sec:Conservative}

In Step 1, we created a graph $G^-_\rho(v)$ that consists of vertices that represent cliques from the original conflict graph and can simultaneously transmit in an optimal way.  As discussed above, graphs $G^-_\rho(v)$ can have the maximum degree which is higher than $G$.  The increase in maximum degree is expected; since cliques of nodes are now transmitting simultaneously, their joint interference footprint is expected to grow.  To guarantee improvement in normalized sum-rates,  the simplest way is a conservative selection algorithm that satisfies two properties:

\begin{enumerate}
\item Each node $v$ from the conflict graph $G$ is represented by \emph{only one} node in the consolidated graph $G_\rho(v)$
\item The degree of the vertex that represent $v$ in the consolidated graph $G_\rho(v)$ is less than or equal to the degree of $v$ in the conflict graph $G$.
\end{enumerate}
Please note that we say user $v$ in $G$ is represented by vertex $w$ in $G_\rho(v)$ if $v \in nodes(w)$.
The two simple properties described above ensure that the procedure will achieve a normalized sum-rate of $\myalpha(\rho) = 1/(\Delta_{G_\rho}+1)$, where $\Delta_{G_\rho}$ is the maximum over all maximum degrees of the $G_\rho(v)$ graphs.  The distributed implementation of the conservative selection requires $\tau = 3\rho+3$ hops of connectivity information and is labeled $\mathcal A_1(3\rho+3)$.  For a detailed description of the conservative selection algorithm we refer the reader to \cite{santacruz:2013}.

\subsection{Aggressive Selection Algorithm}
\label{sec:Aggressive}

Now we present a new approach to selecting which vertices from the graphs $G^-_\rho(v)$ should be carried over to graphs $G_\rho(v)$.  In \cite{santacruz:2013} we proved that the conservative selection algorithm in the previous subsection ensures that the normalized sum-rate achieved by the network is always greater than or equal to the normalized sum-rate using distributed coloring, for all $\rho$ and for any arbitrary graph.  However, since it must provide this strict guarantee, it tends to be overly conservative and loses potential gains in large classes of graph. To address this issue, we propose a second clique selection algorithm that is more aggressive.

The Aggressive Selection Algorithm relaxes the two major constraints of the conservative algorithm: 1) it allows nodes from the original conflict graph to be represented by more than one vertex in the consolidated graph and 2) the degrees of the vertices being kept for the consolidated graphs are allowed to be larger than the degrees in the conflict graph of the nodes that make up the vertices.  We have mentioned that graphs with larger maximum degrees are undesirable, so the aggressive algorithm provides a heuristic that balances the maximum degree of the consolidated graphs with the number of vertices representing each node.  Recall that the normalized sum-rate of a network is the fraction of active time slots of the worst-case node in the network.  Using our proposed distributed procedure, this is simply $\min_{v\in V} a(v)/\Delta_{G_\rho}$, where $a(v)$ is the number of vertices in $G_\rho(v)$ representing node $v$ from the original conflict graph.  As long as the number of vertices representing each node in the network increases enough to counteract for the increase in the maximum degree of the consolidated graph, gains in normalized sum-rate can be achieved.  We now describe the heuristic of the Aggressive Selection Algorithm.

We begin with some assumptions about the amount of network information available to each node in the network.  For purposes of exposition, we will describe the general idea of the aggressive algorithm by assuming complete connectivity information and later show that only $3\rho +1$ hops of connectivity information is needed.  We denote this centralized aggressive selection algorithm by $\mathcal A_2(Full)$ and the distributed form of the aggressive selection is denoted $\mathcal A_2(3\rho+1)$.  Also, we consider the idea of a temporary graph $G^-_\rho$.  In Step 1 we described the process of each node obtaining $G^-_\rho(v)$, the graph $G^-_\rho$ can be described as a ``centralized temporary graph" with full topology information, but still forming cliques of at most diameter $\rho$.  The graph $G^-_\rho$ is a single graph that contains all the possible $r$-cliques, $r = 0, ..., \rho$, in the original conflict graph $G$.

\subsubsection{Example}
We present an example using the line-clique graph depicted in Figure~\ref{fig:conflictgraph} and begin with the case where $\rho=1$.  Consider the graph $G^-_\rho$ and the process the centralized entity uses to decide which vertices to keep and which vertices to prune to make the consolidated graph $G_\rho$.

The basis for the aggressive selection algorithm is quite simple.  We wish to keep as many vertices from $G^-_\rho$ as possible but there are some vertices that create a lot of interference and are somewhat redundant.  For this reason, a vertex, $w$, representing a set of nodes, $nodes(w)$, is removed if every member of $nodes(w)$ appears more twice somewhere else in the vertices of graph $G^-_\rho$.  For example, consider the 0-clique, $\{2\}$, since Node 2 appears in the cliques consisting of $\{1, 2\}$ and $\{2, 3\}$, the 0-clique $\{2\}$ is removed.  The intuition behind the selection process is that we wish to remove nodes to avoid interference and increasing degrees, but at the same time let every user in the original graph have enough contributions in order to have increasing normalized sum-rate.  The aggressive selection step is illustrated in Figure~\subref*{figs:linecliqueremoved}.  With these nodes removed, the final graph that is scheduled is shown in Figure~\subref*{figs:linecliquefinal}.

\begin{figure*}
	\begin{minipage}{.31\linewidth}
	\centering
	\subfloat[$G^-_\rho$ for $\rho = 1$ \label{figs:linecliquecliques}]{%
	\includegraphics[width=2in]{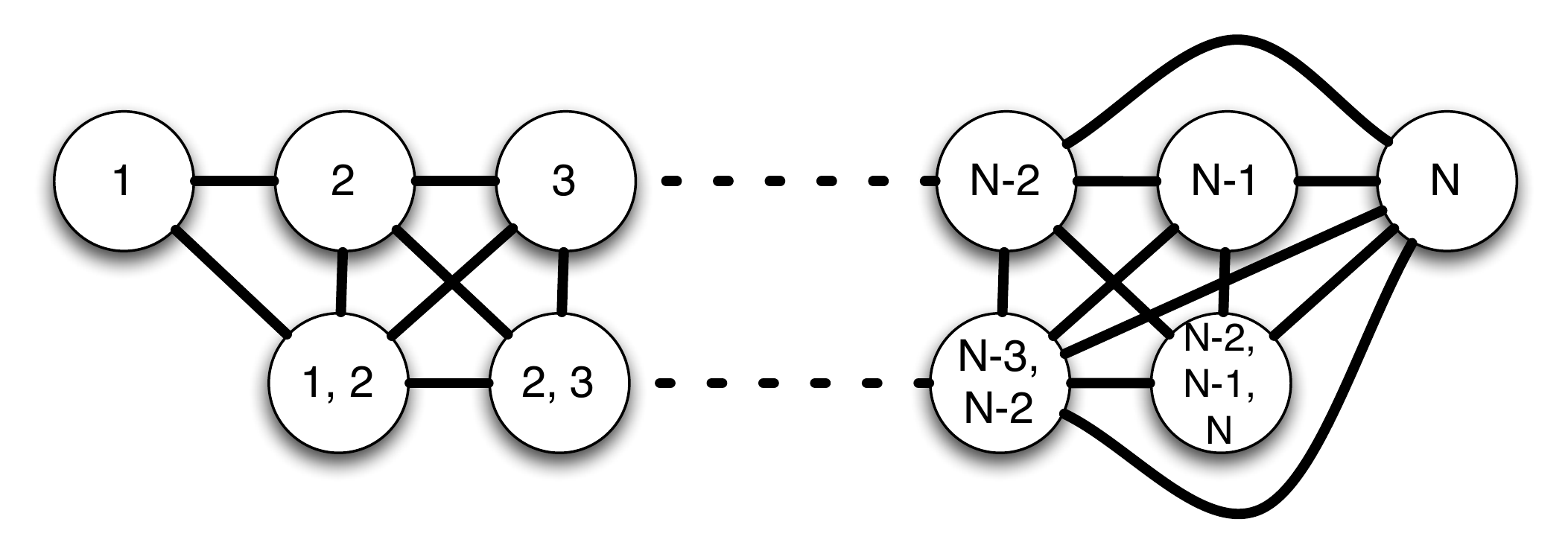}
    	}
	\end{minipage}
	\begin{minipage}{.31\linewidth}
	\centering
    	\subfloat[Cliques that appear twice elsewhere are removed \label{figs:linecliqueremoved}]{%
	\includegraphics[width=2in]{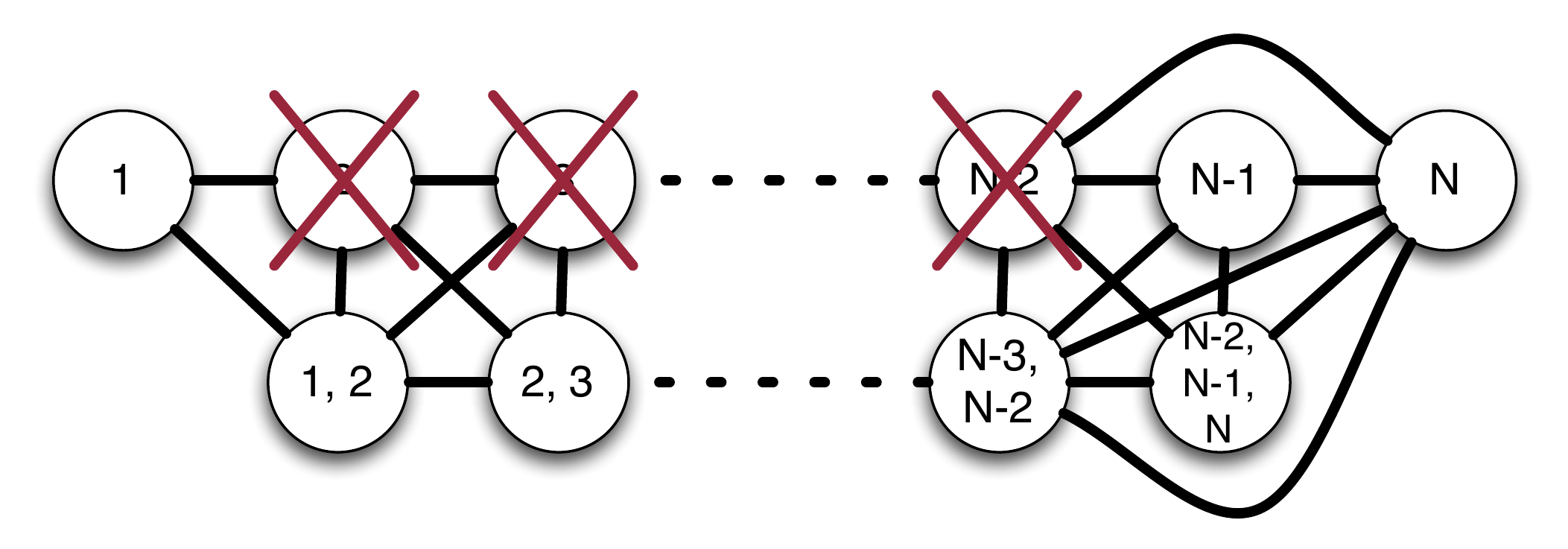}
    	}
	\end{minipage}
	\begin{minipage}{.31\linewidth}
	\centering
	\subfloat[Final graph, $G_1$, after aggressive algorithm \label{figs:linecliquefinal}]{%
	\includegraphics[width=2in]{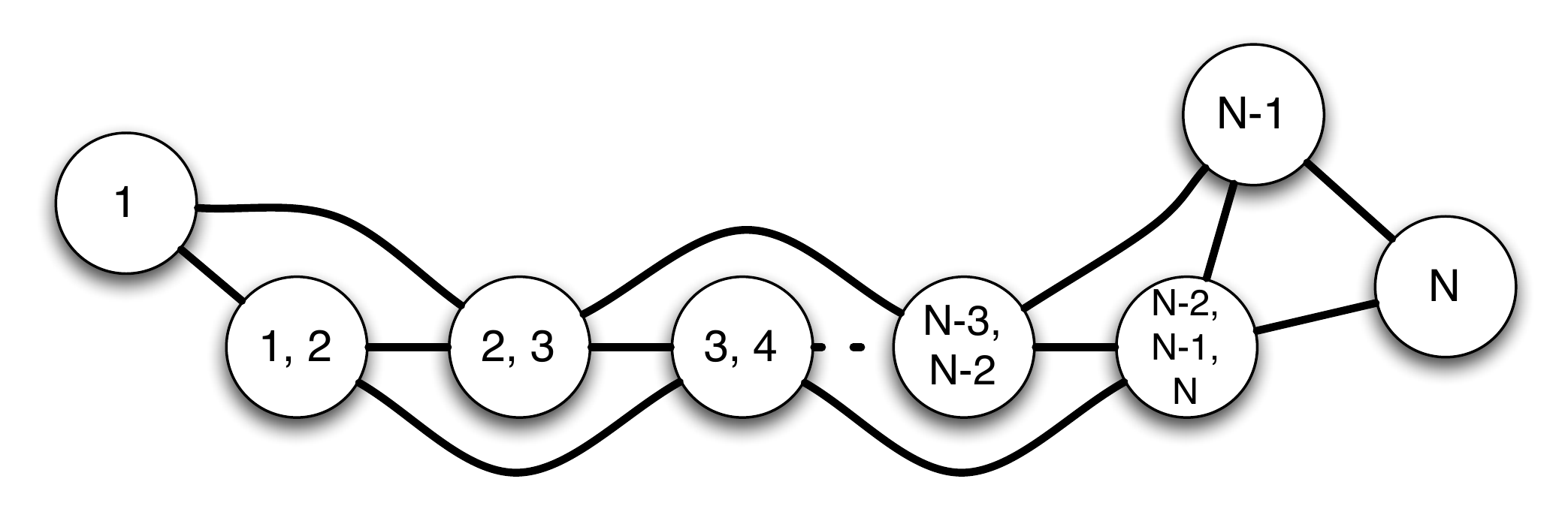}
    	}
	\end{minipage}

\caption{Example of Aggressive Selection Algorithm, $\rho = 1$}
\label{figs:conservative}
\end{figure*}


%
%
%

%
%
%

Now, consider the algorithm when we begin by letting $\rho = 2$.  First, all $1$-cliques and $2$-cliques are generated as in Figure \subref*{figs:linecliquecliquesr2}.  For clarity, we omit the edges is these graphs.  Now, starting with the $0$-cliques (i.e., the single nodes), they are removed if every member is present 2 times in any of the 1-cliques or 2-cliques.  There is an exception to the removal of 0-cliques that establishes that all 0-cliques representing a node of degree 1 remain in the graph even if they are appear twice elsewhere.  This is to avoid pathological cases of possible starvation of end nodes.  In this case, clique $\{1\}$ is not removed.  We define the set of nodes with degree 1 in the conflict graph $G$ as $O_\Lambda$.

Next, the 1-cliques are removed if they appear 2 times in the set of 2-cliques.  It is important to note that $i$-cliques are removed only if they appear twice in the set of $j$-cliques, for all $j>i$.  After performing this operation, illustrated in Figure \subref*{figs:linecliqueremovedr2}, the final graph is shown in Figure \subref*{figs:linecliquefinalr2}.


Notice that the achievable normalized sum-rate in our example line-clique with $\rho = 0$ was $\alpha = 1/4$, with the aggressive selection algorithm when $\rho=1$ then $\alpha_2(\rho) = 2/5$, and when $\rho=2$ then $\alpha_2(\rho) = 3/7$.  Also, compare to the conservative algorithm which achieves $\alpha_1(\rho) = 1/3$, for both $\rho = 1$ and $\rho =2$.  This increase in normalized sum-rate exemplifies the advantages of the aggressive algorithm.


\begin{figure*}
	\begin{minipage}{.31\linewidth}
	\centering
	\subfloat[$G^-_\rho$ for $\rho = 2$ \label{figs:linecliquecliquesr2}]{%
	\includegraphics[width=2in]{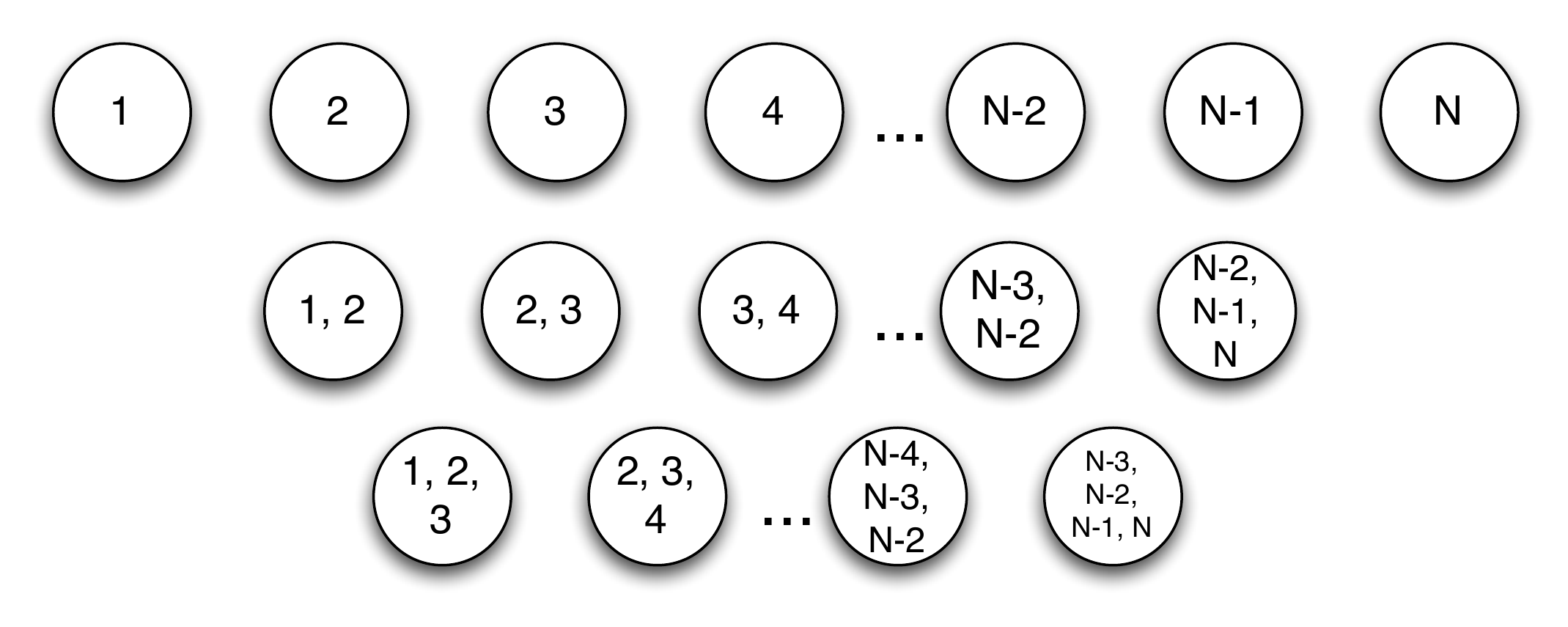}
    	}
	\end{minipage}
	\begin{minipage}{.31\linewidth}
	\centering
    	\subfloat[Cliques that appear twice elsewhere are removed \label{figs:linecliqueremovedr2}]{%
	\includegraphics[width=2in]{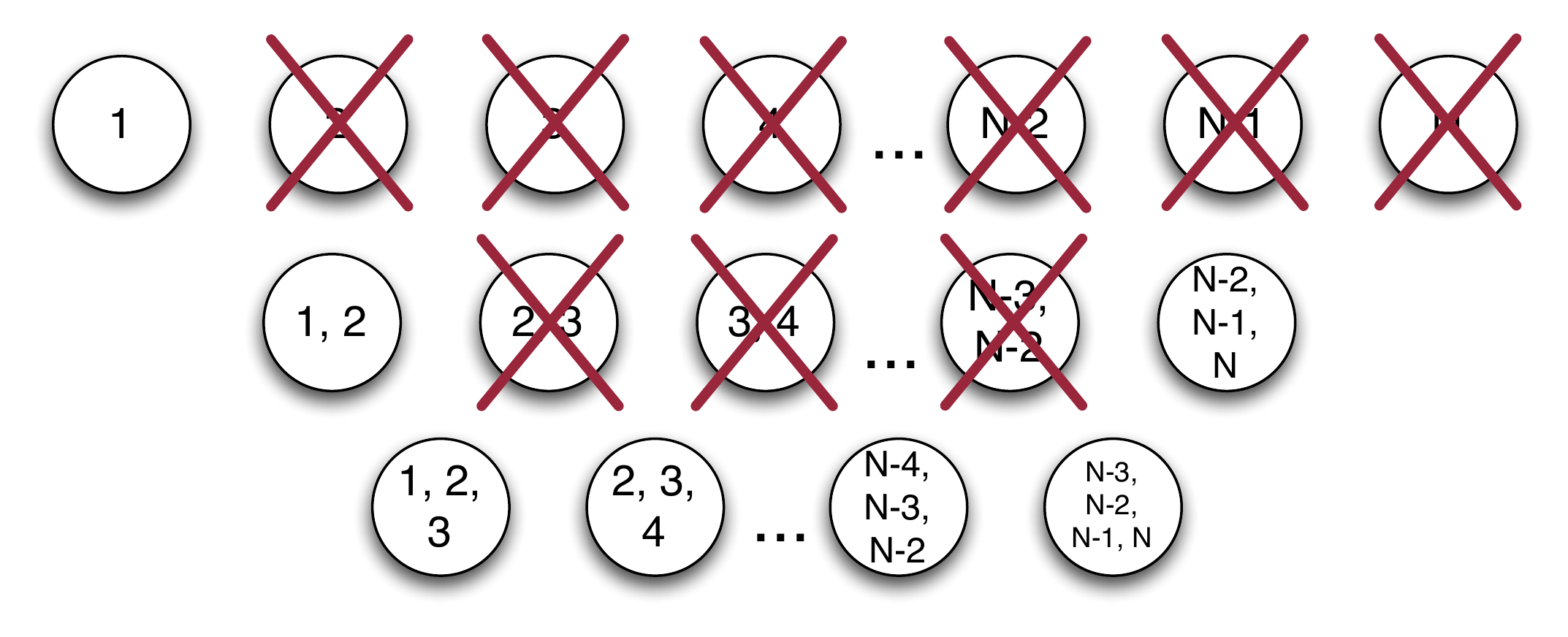}
    	}
	\end{minipage}	
	\begin{minipage}{.31\linewidth}
	\centering
	\subfloat[Final graph, $G_2$, after aggressive algorithm \label{figs:linecliquefinalr2}]{%
	\includegraphics[width=2in]{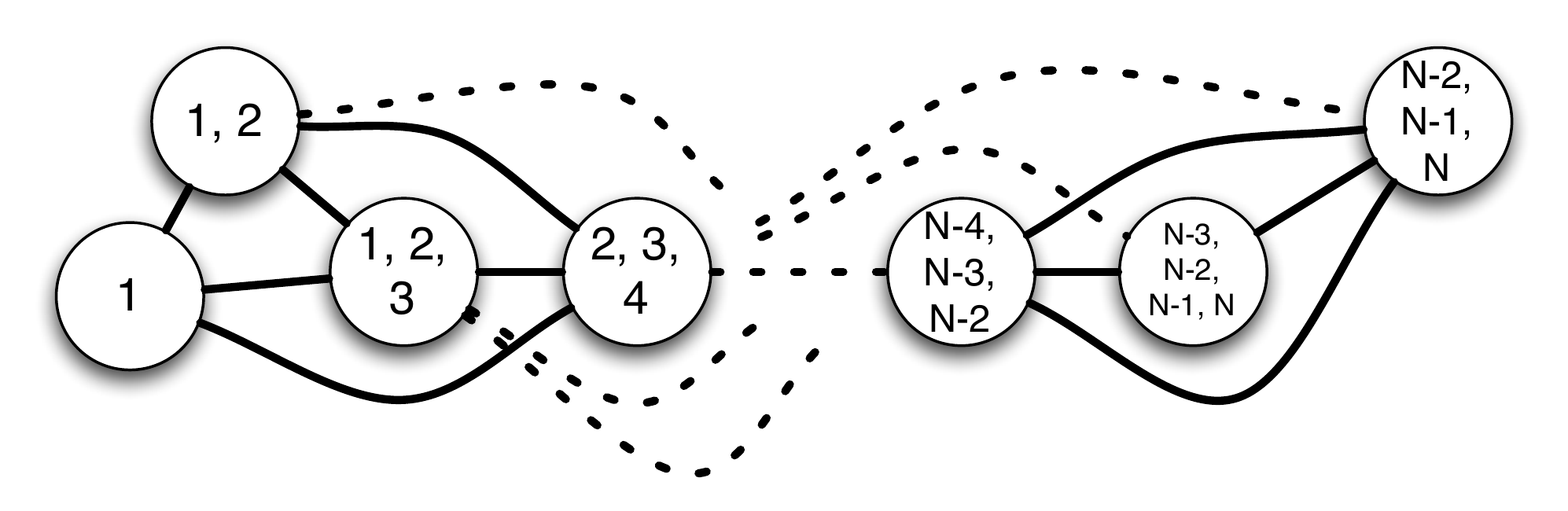}
    	}
	\end{minipage}
	
\caption{Example of Aggressive Selection Algorithm, $\rho = 2$}
\label{figs:aggressiver2}
\end{figure*}


%
%

%
%
%
%
\subsubsection{Distributed Aggressive Algorithm}
We now show that the centralized aggressive selection algorithm described above can be performed in a distributed manner.  Let $G_\rho$ be the consolidated graph of the centralized algorithm and let $G_\rho(v)$ be the consolidated graph from node $v$'s point of view of a distributed algorithm to be described in this subsection.  The requirement that must be fulfilled is that each node performing Kuhn's multicoloring algorithm on $G_\rho(v)$ is equivalent to applying the multicoloring algorithm to $G_\rho$.  Because Kuhn's algorithm only requires knowledge about the 1-hop neighbors, our individual nodes are only interested in knowing 1-hop information about $G_\rho$. In other words, our objective is to show that the neighborhood of vertices containing each node $v$ in $G_\rho(v)$ is identical to their neighborhood in $G_\rho$.  We now describe the distributed aggressive selection algorithm.

We assume that each node has only $3\rho +1$ hops of connectivity information and that each node will be performing independent actions.  Based on the removal heuristic of the centralized algorithm, each node will remove nodes that appear twice somewhere else in the graph they observe using the algorithm described in Algorithm \ref{distributedaggressive}.  We define the set $O_{3\rho+1}(v)$ as the set of nodes of degree 1 in the $3\rho+1$-neighborhood of node $v$ in the original conflict graph $G$.  Also define $A_u^v(s)$ as the number of $s$-cliques representing node $u$ in $G^-_\rho(v)$.  The summary of the Distributed Aggressive Selection Algorithm is shown in Algorithm \ref{distributedaggressive}.

\begin{algorithm}
\caption{Distributed Aggressive Selection $\mathcal A_2(3\rho+1)$}
\label{distributedaggressive}
\textbf{Input:}  Graphs $G^-_\rho(v)$ for each node $v\in V$  with $3\rho+1$ hops of topology information
      	\begin{algorithmic}[1]
		\STATE  The 0-cliques representing members of $O_{3\rho+1}(v)$ are ensured remain in $G_\rho(v)$.
		\FOR{$r =0$ \TO $\rho -1$}
			\STATE Consider vertex $w \in W^-(v)$ (except those representing a node in $O_{3\rho+1}(v)$) that represents an $r$-clique in G.  A vertex $w \in W^-(v)$ is removed from $G^-_\rho(v)$ if for every $u \in nodes(w)$
			\begin{align}
		 	A_u^v(s) \geq 2, \mbox{for }s>i  \mbox{ if } u \notin O_{3\rho+1}(v) \nonumber\\
		 	A_u^v(s) \geq 1, \mbox{for }s>i  \mbox{ if } u \in O_{3\rho+1}(v)\nonumber
			\end{align}
		\ENDFOR
		\STATE The graph $G^-_\rho(v)$ is updated by removing all identified nodes and their connecting edges.  The final result is a graph $G_\rho(v)$.
		\end{algorithmic}
\end{algorithm}


\subsubsection{Consistency of $\mathcal A_2(3\rho+1)$}
We wish to establish that the $\mathcal A_2(3\rho+1)$ algorithm is a valid distributed implementation of algorithm $\mathcal A_2(Full)$.  Since our final objective is to schedule $G_\rho$, all we have to do to establish the validity of $\mathcal A_2(3\rho+1)$ is to show that the neighborhood of cliques containing each node $v$ in $G_\rho(v)$ is identical to their neighborhood in $G_\rho$.  To do this we present the following theorem.

\begin{theorem}
Let each node $v\in V$ have $3\rho+1$ hops of connectivity knowledge.  The neighborhood of every $v\in nodes(w)$ for every $w \in G_\rho(v)$ is identical to the neighborhood of $w$ in $G_\rho$.
\end{theorem}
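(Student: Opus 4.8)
The plan is to establish the identity of neighborhoods by a \emph{radius-of-dependence} argument. For any surviving clique-vertex $w$ with $v \in nodes(w)$, everything that determines the $G_\rho$-neighborhood of $w$ falls into three decisions: (i) which cliques $w'$ are adjacent to $w$ in the identification graph $G^-_\rho$; (ii) whether each such $w'$ survives the centralized pruning of $\mathcal A_2(Full)$; and (iii) whether $w$ itself survives. I would show that each of these three decisions is a function only of the portion of $G$ lying within $3\rho+1$ hops of $v$. Since in $\mathcal A_2(3\rho+1)$ node $v$ possesses exactly that information, its local verdicts on (i)--(iii) coincide with the centralized ones, so the neighborhood of $w$ in $G_\rho(v)$ equals its neighborhood in $G_\rho$. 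Both inclusions then follow simultaneously, because matching every decision forces the two neighbor sets to be equal rather than merely nested.

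First I would bound the relevant distances. Because $v \in nodes(w)$ and $G[nodes(w)]$ has diameter at most $\rho$, every node of $w$ lies within $\rho$ hops of $v$. If $w'$ is adjacent to $w$ in the consolidated graph, then by the edge rule some node of $w'$ either coincides with, or is one hop from, a node of $w$, hence lies within $\rho+1$ hops of $v$; since $w'$ also has diameter at most $\rho$, all of $nodes(w')$ lies within $2\rho+1$ hops of $v$. Using these bounds I would dispatch item (i): identifying the cliques adjacent to $w$, together with the adjacency relation itself, needs only the induced subgraph on the ball of radius $2\rho+1$ plus the extra hop required to certify edges and maximality, all contained within $3\rho+1$ hops. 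Thus $G^-_\rho(v)$ and $G^-_\rho$ agree on $w$, on all of its candidate neighbors, and on their incidences.

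The crux is item (ii), the pruning of the neighbor cliques, which is where the constant $3\rho+1$ is forced. The removal rule deletes an $r'$-clique $w'$ precisely when every $u' \in nodes(w')$ lies in at least two (or, for degree-one nodes, at least one) cliques of larger index, so reproducing the verdict requires $v$ to compute the exact counts $A_{u'}^{v}(s)$ and the exact degree status of each $u'$. Each such $u'$ sits within $2\rho+1$ hops of $v$, and every larger clique through $u'$ has its nodes within a further $\rho$ hops, i.e. within $3\rho+1$ hops of $v$. I would show that these cliques, and the degree-one determinations encoded by $O_{3\rho+1}(v)$, are fully visible to $v$, so that the counts are \emph{exact} rather than truncated at the boundary and the survive-or-prune decision matches the centralized one. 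Item (iii) is strictly easier, since the nodes of $w$ lie within $\rho$ hops and their larger cliques within $2\rho$, comfortably inside the ball.

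I expect the main obstacle to be the tightness of this accounting. One must verify that certifying the maximality of the larger cliques counted for a node $u'$ near the rim of the $2\rho+1$ ball, together with certifying the pairwise distances and the degree-one status used in the exception, never reaches past $3\rho+1$ hops. This is exactly why the algorithm substitutes $O_{3\rho+1}(v)$ for the true degree-one set $O_\Lambda$: a node whose complete neighborhood is not yet visible to $v$ must not be misclassified as degree one, and the argument must confirm that every node on which a relevant count or exception depends lies deep enough inside $v$'s view for its clique-count and its degree to be resolved exactly. Managing this boundary effect carefully, rather than retreating to a looser constant, is the part of the proof demanding the most attention.
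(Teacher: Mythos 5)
Your proposal follows essentially the same route as the paper's proof: the paper splits the argument into a lemma showing that the neighborhood of $w$ in $G^-_\rho(v)$ and in $G^-_\rho$ coincide (your item (i), via the same $2\rho+1$ bound on the nodes of adjacent cliques) and a lemma showing the pruning verdicts agree (your item (ii), via the same observation that every clique contributing to a count $A_{u'}(s)$ lies within $2\rho+1+\rho = 3\rho+1$ hops of $v$). Your explicit treatment of item (iii) and of the degree-one boundary exception is slightly more careful than the paper's write-up, but the underlying radius-of-dependence argument is identical.
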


\begin{proof}
We will break up the proof into two parts.  First, we show that, from the point of view of a node $v$, the neighborhoods of every vertex representing $v$ in $G^-_\rho$ and $G^-_\rho(v)$ are identical, then we show that the cliques removed from each neighborhood are the same.


\begin{lemma}
Let each node $v\in V$ have $3\rho+1$ hops of connectivity knowledge.  The neighborhood of every $w \in G^-_\rho(v)$ such that $v\in nodes(w)$ is identical to the neighborhood of $w$ in $G^-_\rho$.
\end{lemma}
\begin{proof}\emph{(Lemma 1)}
Let $\Gamma_{G^-_\rho}(w)$ be the neighborhood of vertex $w$ in $G^-_\rho$.  Also, consider $G^-_\rho(v)$, where $v$ is a member of $nodes(w)$.  There exists a vertex $w'$ in $G^-_\rho(v)$ such that $nodes(w)=nodes(w')$, since there are at least $\rho$ hops of topology knowledge.  Now, let $\Gamma_{G^-_\rho(v)}(w')$ be the neighborhood of $w'$ in $G^-_\rho(v)$.  Because $v$ has $3\rho+1$ hops of knowledge and the cliques being made are of, at most, diameter $\rho$, $v$ has perfect knowledge of all cliques of up to diameter $\rho$ that include nodes that are at most $2\rho+1$ hops away.  Now, every vertex in $\Gamma_{G^-_\rho}(w)$ represents nodes that are at most $2\rho +1$ hops away from $v$ since each vertex in $\Gamma_{G^-_\rho}(w)$ has at most diameter $\rho$.  Therefore $\Gamma_{G^-_\rho}(w)$ = $\Gamma_{G^-_\rho(v)}(w')$.
\end{proof}


Now that we have shown all the correct neighborhood cliques are generated, we continue by showing that the correct cliques are removed as well.


\begin{lemma}
A vertex in $\Gamma_{G^-_\rho(v)}(w')$ is removed if and only if it is also removed from  $\Gamma_{G^-_\rho}(w)$.
\end{lemma}

\begin{proof}(Lemma 2)
We begin the proof by proving the forward direction, i.e., if a vertex is removed from $\Gamma_{G^-_\rho(v)}(w')$, it is also removed from $\Gamma_{G^-_\rho}(w)$.  A vertex is removed from   $\Gamma_{G^-_\rho(v)}(w')$ if and only if all its members appear two or more times somewhere else in the graph.  By construction, for every $u$ in a clique in $\Gamma_{G^-_\rho(v)}(w')$, $A_u^v(s) \leq A_u(s)$ for every $s = 1, ..., \rho$.  This is because the vertices in $G^-_\rho(v)$ are a subset of the vertices in $G^-_\rho$.  Therefore, a vertex is removed from $\Gamma_{G^-_\rho(v)}(w')$, it is also removed from $\Gamma_{G^-_\rho}(w)$.

In the other direction, we prove that if a clique is removed from $\Gamma_{G^-_\rho}(w)$, it is also removed from $\Gamma_{G^-_\rho(v)}(w')$.  A clique is removed from $\Gamma_{G^-_\rho}(w)$ if and only if all its members appear twice elsewhere in the graph.  All cliques in $\Gamma_{G^-_\rho}(w)$ are composed by nodes at most $2\rho +1$ hops away from $v$.  Since every formed clique is at most diameter $\rho$, every appearance of nodes that compose the $\Gamma_{G^-_\rho}(w)$ can only occur in cliques with members that are at most $2\rho+1 +\rho = 3\rho +1$ from $v$.  Since $v$ has this amount of knowledge, if a clique is removed from $\Gamma_{G^-_\rho}(w)$, it is also removed from $\Gamma_{G^-_\rho(v)}(w')$.
\end{proof}


Since $\Gamma_{G^-_\rho}(w) = \Gamma_{G^-_\rho(v)}(w')$ and a node is removed from  $\Gamma_{G^-_\rho(v)}(w')$ if and only if it is also removed from $\Gamma_{G^-_\rho(v)}(w')$, for any arbitrary $w$ and $v$, then we have that the neighborhood of every $v\in nodes(w)$ for every $w \in G_\rho(v)$ is identical to the to the neighborhood of $w$ in $G_\rho$.
\end{proof}

In general, once the algorithm has been shown to be consistent we can describe the graph $G_\rho$ as the union of all $G_\rho(v)$ and each $G_\rho(v)$ is a local view subgraph of the $G_\rho$ graph centered around node $v$.  The algorithm is finalized by each node applying Kuhn's multicoloring to the graph $G_\rho(v)$.  The normalized sum-rate achieved by the Aggressive Selection Algorithm $\mathcal A_2(3\rho+1)$ with parameter $\rho$ is $\alpha_2(\rho) = \min_{v\in V} a(v)/\Delta_{G_\rho}$, where $a(v)$ is the number of vertices in $G_\rho(v)$ representing node $v$ and $\Delta_{G_\rho} = \max_{v \in V} \Delta_{G_\rho(v)}$.

\section{Step 3: Scheduling}
\label{sec:Step3}

In Step 2 of the algorithms, each user $v$ in the network finishes with a graph $G_\rho(v)$ that is composed of nodes representing $r$-cliques of at most diameter $\rho$ and at least one of those nodes includes user $v$.  The resulting $r$-cliques from Step 2 indicate that whenever user $v$ transmits, it will do so along with all the other users that are members of the $r$-cliques that include user $v$ using the optimal physical layer scheme.  This simultaneous transmission of users in an $r$-clique using optimal physical layer schemes is possible under the assumption of $\rho +1$ hops of channel information (since $r\leq \rho$) and is the opportunity for gain that our algorithms are leveraging.  In the third step of the proposed algorithms, users in the network determine the time-slots when their respective $r$-cliques have been assigned to transmit.  In other words, in Step 3, the graphs formed in Step 2 are scheduled.  One approach to schedule nodes in a graph is to use graph coloring \cite{diestel:2005}.  The problem of minimizing the number of required colors to color a graph, and thus increasing scheduling efficiency, has been a widely studied \cite{linial:1992,luby:1993, kothapalli:2006}.  


To perform scheduling in our algorithms, we use a local multicoloring algorithm introduced by Kuhn in \cite{kuhn:2009} since it is a one-shot algorithm and does not add to the number of hops of network information required for execution.  First, we describe the local multicoloring algorithm in terms of a normal graph and explain the performance that can be expected, then we go into the detail of how this algorithm is used in our Step~3 of our algorithms.

\subsection{Local Multicoloring Algorithm}
Consider a graph $G = (V, E)$ with $N$ nodes.  We assume each node knows the number of users, $N$, and a parameter, $k$.  The local multicoloring algorithm proceeds in three steps:

\begin{enumerate}
\item Each node $v\in V$ generates a vector $L_v = [l_{v,1}, l_{v, 2}, \ldots , l_{v, k}]$ of $k$ random numbers, where each $l_{v, i}$ is chosen uniformly from the set $\{1, 2, \ldots, kN^4\}$.
\item Each node $v$ sends the vector $L_v$ to all its neighbors.  We call the set of neighbors of node $v$, $\Gamma(v)$.  Each node $v$ also receives the vectors $L_u$, for all $u\in \Gamma(v)$
\item  Each node $v$ acquires all colors $i$ for which $l_{v, i} < l_{u,i}$, for all $u \in \Gamma(v)$.
\end{enumerate} 
The results in \cite{kuhn:2009} show that if $k$ is chosen to be greater that or equal to $6(\Delta +1) \ln(N)/\varepsilon^2$, then each node $v$ will acquire at least a fraction $\frac{1-\varepsilon}{\delta_v+1}$ of the $k$ colors available, with high probability.  It is important to note that one could relax the assumption of having to know $N$ and $k$ (which requires knowledge of $\Delta$) and only require knowledge of an upper bound on $N$, denoted as $\overline N$, and a predetermined, network-wide $\varepsilon$.  With this information, each node would know that the length of the random number vector it has to choose is $k = 6(\overline N +1) \ln(\overline N)/\varepsilon$ and each random number would be uniformly chosen from the set $\{1, 2, \ldots, k\overline N^4\}$.

\subsection{Application}
The local multicoloring algorithm can be used to schedule the sub-networks that have been formed in Step 2.  We begin by assuming each node $v$ in the original conflict graph $G$ generates a random number vector $L_v = [l_{v,1}, l_{v, 2}, \ldots , l_{v, k}]$ of $k$ random numbers, where each $l_{v, i}$ is chosen uniformly from the set $\{1, 2, \ldots, k\overline N^4\}$, where $k = 6(\overline N +1) \ln(\overline N)/\varepsilon^2$.

Consider the graph $G_\rho(v) = (W(v), F(v))$ from the point of view of node $v$.  We have assumed that each node has $\tau = 3\rho+3$ hops of connectivity information for the conservative algorithm and $\tau = 3\rho+1$ hops of connectivity information for the aggressive algorithm.  We now assume that in the process of information exchange to learn the connectivity of the network, the random vectors from all nodes $\tau$ hops away are also learned by each node.  This means that every user $v$ in the original conflict graph, $G$, knows all the random number vectors for all users in its $r$-clique(s) and all the random number vectors for all members of its neighboring $r$-cliques in $G_\rho(v)$. 

Each node $v$ finds the node with the smallest ID in each $w \in W(v)$.  Node $v$ assumes that the random number vector for vertex $w \in W(v)$ is the random number vector corresponding to the node with the smallest ID in that $r$-clique.  In other words, 
\begin{equation}
L_{w \in W(v)} = L_{\min\{x: x\in nodes(w)\}}.
\end{equation}

Once the random vectors for each vertex in $G_\rho(v)$ have been identified, we can apply the local multicoloring algorithm such that node $v$ knows the colors assigned to all $r$-cliques to which it belongs.  A vertex $w \in G_\rho(v)$ will be assigned time-slot $i$ if $l_{w, i} < l_{z,i}$, for all $z \in \Gamma(w)$.  If node $v$ is represented by vertex $w$, then node $v$ knows $L_w$ and $L_z$ for all $z \in \Gamma(w)$ because we assume each node knows the random number vectors of nodes $\tau$ hops away.  This also ensures that all $L_w$ are consistent over all $G_\rho(v)$.

Using the result from \cite{kuhn:2009}, it can be concluded that each $r$-clique represented by some vertex $w$ will be assigned a fraction at least $(1-\varepsilon)/(\delta_w +1)$ of the total $k$ time-slots assigned with high probability, where $\delta_w$ is the degree of vertex $w$ in the graph $G_\rho$.

\subsection{Overhead}
Let us analyze the overhead of the Step 3 in our algorithms in more detail.  First, we note that in terms of hops of information, Step~3 does not require any extra hops beyond the $\tau$ hops of connectivity we assumed in Step 2.  We have assumed that as the connectivity information is being exchanged the random number vectors required for local multicoloring are also being communicated.  The sharing of these vectors requires communication by each node of $6(\overline N +1) \ln(\overline N)/\varepsilon^2$ random numbers, each with a possible magnitude of up to $k\overline N^4$.  This results in generation and communication of $\mathcal O(\overline N \log^2(\overline N)/\epsilon^2)$ random bits by each node.  We can use the same non-trivial probabilistic argument mentioned in \cite{kuhn:2009} to claim the same results with only $\mathcal O(\log(\overline N))$ bits required.

Another important consideration to keep in mind is that we have assumed that the parameter $k$ can be chosen arbitrarily at the cost of complexity and amount of random bits to be exchanged.  Since $k$ represents the number of time-slots to be assigned assuming a static graph $G_\rho$, in practical applications, the value of $k$ might be constrained by the coherence time of the network.  In the description of our algorithms we have assumed that $k$ is smaller than the connectivity and channel coherence time of the network.

\section{Results}
\label{sec:Results}

We first characterize the generalized normalized sum-rate, $\myalpha(\rho)$, achieved by the previously presented conservative algorithm \cite{santacruz:2013} and the newly proposed aggressive algorithm with $(\eta,\tau)$ hops of network information.  Both algorithms conclude with a set of graphs, $G_\rho(v)$, $\forall v \in G$, and the performance of both proposed algorithms can be described in terms of the topology characteristics of the consolidated final graph, $G_\rho$.  The graph $G_\rho$ is the union over all graphs $G_\rho(v)$.  The normalized sum-rate performance of the algorithms is summarized in the following theorems.

\begin{theorem} \emph{(Generalized normalized sum-rate for conservative algorithm)} \cite{santacruz:2013}
Consider a conflict graph, $G$, where each user has $\eta$ hops of channel information and $\tau$ hops of connectivity information.  Using the conservative sub-network scheduling algorithm, the achievable normalized sum-rate is 

\begin{equation}
\myalpha_1(\eta, \tau) = \myalpha_1(\rho) = \frac{1-\varepsilon}{\Delta_{G_\rho} +1},
\end{equation}
with high probability, where $\Delta_{G_\rho}$ is the maximum degree of graph $G_\rho$ and $\varepsilon >0$.
\end{theorem}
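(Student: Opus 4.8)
The plan is to combine three facts: the two structural invariants enforced by the conservative selection, the information-theoretic optimality of each scheduled $r$-clique inherited from the MIGS framework of \cite{aggarwal}, and the per-vertex guarantee of Kuhn's local multicoloring \cite{kuhn:2009}. The rate bound then falls out by tracking the fraction of active time slots assigned to the worst-case user.

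First I would invoke Property~1 of the conservative algorithm: every user $v$ of the conflict graph $G$ is represented by exactly one vertex $w_v$ of the consolidated graph $G_\rho$, with $v \in nodes(w_v)$. Since $\delta_{w_v}$ is a vertex degree in $G_\rho$, it satisfies $\delta_{w_v} \le \Delta_{G_\rho}$ (Property~2 additionally ties this back to $\deg_G(v)$, which is what prevents the conservative scheme from ever underperforming link scheduling, but it is not needed for the bound itself). Because $w_v$ represents an $r$-clique with $r \le \rho$ and every node carries $\eta = \rho+1$ hops of channel information, the clique has sufficient local channel state to run an optimal interference-embracing code and therefore attains normalized sum-capacity $1$ whenever it is scheduled; a set of identically colored cliques forms an independent clique set whose members do not mutually interfere, so they simultaneously achieve $\myalpha(\rho)=1$.

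Next I would apply Kuhn's multicoloring to $G_\rho$ with $k \ge 6(\Delta_{G_\rho}+1)\ln(N)/\varepsilon^2$. By the recalled guarantee, each vertex $w$ acquires at least a fraction $(1-\varepsilon)/(\delta_w+1)$ of the $k$ time slots with high probability. Specializing to the unique vertex $w_v$ representing user $v$ and using $\delta_{w_v}\le\Delta_{G_\rho}$ gives an active-time fraction of at least $(1-\varepsilon)/(\Delta_{G_\rho}+1)$ for \emph{every} user. Since the normalized sum-rate equals the active-time fraction of the worst-case user \cite{aggarwal}, and each user's single clique delivers $\myalpha=1$ while it is active, taking the minimum over $v\in V$ yields $\myalpha_1(\rho) \ge (1-\varepsilon)/(\Delta_{G_\rho}+1)$ with high probability, matching the claimed value.

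The main obstacle is not this rate bookkeeping but establishing that the distributed coloring is \emph{consistent}: each node runs Kuhn's algorithm on its own local view $G_\rho(v)$, and I must show this assigns $w_v$ the same colors as a centralized run on the global $G_\rho$. Because Kuhn's algorithm depends only on a vertex's $1$-hop neighborhood, it suffices to prove that the neighborhood of $w_v$ in $G_\rho(v)$ coincides with its neighborhood in $G_\rho$, which in turn requires that every clique adjacent to $w_v$ is both \emph{seen} and \emph{pruned identically} in the local and global views. This is exactly where the $\tau = 3\rho+3$ hops of connectivity are consumed, by an argument paralleling the consistency Lemmas proved for the aggressive algorithm: a clique adjacent to $w_v$ spans nodes at most $2\rho+1$ hops from $v$, and resolving its retention under the conservative pruning rule requires observing cliques a further constant number of hops out. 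Once neighborhood consistency is secured, the per-vertex Kuhn guarantee transfers verbatim, and the high-probability statement follows from the union bound over the $N$ vertices already absorbed into Kuhn's analysis.
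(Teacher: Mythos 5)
Your proposal follows essentially the same route as the paper's proof: invoke the MIGS result from \cite{aggarwal} that the normalized sum-rate equals $\min_i d_i/t$, note that each color class of $G_\rho$ is an independent sub-graph achieving normalized sum-capacity $1$, apply Kuhn's per-vertex guarantee of a $(1-\varepsilon)/(\delta_w+1)$ fraction of the $k$ colors, and use the conservative algorithm's single-representation property to reduce the worst-case minimum to $(1-\varepsilon)/(\Delta_{G_\rho}+1)$. The extra consistency discussion you include is handled by the paper via reference to \cite{santacruz:2013} rather than within this proof, but it does not change the argument.
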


\begin{proof}
We use the result from \cite{aggarwal} regarding the normalized sum-rate of independent graph scheduling.  The results says that if a network is divided into $t$ sub-graphs, $\mathcal A_1, ... \mathcal A_t$ (not all distinct, for some $t$) and each user $i$ belongs to $d_i$ independent sub-graphs, then the normalized sum-rate of the network is 
\begin{equation}
\min_{i \in 1, 2, ... N}\frac{d_i}{t}.
\end{equation}

In our sub-network scheduling algorithms, we have generated $k$ independent subgraphs.  A set of sub-networks that share one of the $k$ colors is an independent sub-graph since is composed by a set of sub-networks, each with a normalized sum-rate of 1, that do not interfere with each other.  

By properties of the local multicoloring algorithm, each sub-network $w \in G_\rho$ will be assigned $\left(\frac{1-\varepsilon}{\delta_w +1} \right)k$ colors in total.  Since in the conservative algorithm each user can only be represented by one sub-network, a user $j$  in sub-network $w$ appears in $d_j = \left(\frac{1-\varepsilon}{\delta_w +1} \right)k$ sub-graphs.  Therefore,

\begin{align}
\myalpha_1(\rho) &= \min_{i \in 1, 2, ... N} \frac{d_i}{t} \\
&= \min_{w \in G_\rho}\frac{\left(\frac{1-\varepsilon}{\delta_w +1} \right)k}{k} \\
&= \frac{1-\varepsilon}{\Delta_{G_\rho} +1}. 
\end{align}
\end{proof}

Similarly, we also describe the performance of the aggressive sub-network scheduling algorithm.

\begin{theorem}\emph{(Generalized normalized sum-rate for aggressive algorithm)}
Consider a conflict graph, $G$, where each user has $\eta$ hops of channel information and $\tau$ hops of connectivity information.  Using the aggressive sub-network scheduling algorithm, the achievable normalized sum-rate is 

\begin{equation}
\myalpha_2(\eta, \tau) = \myalpha_2(\rho) = \min_{v \in G} \sum_{v \in nodes(w)} \frac{1-\varepsilon}{\delta_w +1},
\end{equation}
with high probability.
\end{theorem}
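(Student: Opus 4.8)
The plan is to follow the same template as the proof of Theorem~1, invoking the independent-graph-scheduling result of \cite{aggarwal}: if the network is decomposed into $t$ independent sub-graphs (not all distinct) and user $i$ belongs to $d_i$ of them, the achievable normalized sum-rate is $\min_i d_i/t$. First I would identify these $t$ independent sub-graphs with the $k$ color classes produced by Kuhn's multicoloring on $G_\rho$. A color class is a valid independent sub-graph because every clique in it achieves $\myalpha(\rho)=1$ (guaranteed in Step~1 by the $\eta=\rho+1$ hops of channel information), and cliques sharing a color are mutually non-interfering by construction of the proper coloring. Hence $t=k$.

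Next I would evaluate $d_v$ for an arbitrary user $v$. By the multicoloring guarantee of \cite{kuhn:2009} --- applied to $G_\rho$, which is legitimate because the preceding consistency theorem shows the distributed coloring of the local graphs $G_\rho(v)$ is identical to a centralized coloring of $G_\rho$ --- each vertex $w\in G_\rho$ receives $\left(\frac{1-\varepsilon}{\delta_w+1}\right)k$ of the $k$ colors with high probability, where $\delta_w$ is the degree of $w$ in $G_\rho$. The point of departure from the conservative case is that $v$ is now represented by \emph{every} vertex $w$ with $v\in nodes(w)$, of which there may be several; user $v$ is therefore active in a given color class whenever any such $w$ owns that color.

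The step I would treat most carefully --- and the main obstacle --- is to show that the contributions of these several cliques add without overlap, so that $d_v=\sum_{w:\,v\in nodes(w)}\left(\frac{1-\varepsilon}{\delta_w+1}\right)k$. The key observation is that any two vertices $w_1,w_2$ with $v\in nodes(w_1)\cap nodes(w_2)$ are adjacent in $G_\rho$, since a shared node forces an edge under the consolidation rule; a proper multicoloring therefore assigns them disjoint color sets. Consequently the color classes through which $v$ is scheduled via distinct cliques are themselves distinct, and the per-clique color counts sum without double-counting.

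Finally I would substitute $t=k$ and the expression for $d_v$ into the \cite{aggarwal} formula and simplify:
\[
\myalpha_2(\rho)=\min_{v\in G}\frac{d_v}{k}=\min_{v\in G}\frac{1}{k}\sum_{v\in nodes(w)}\left(\frac{1-\varepsilon}{\delta_w+1}\right)k=\min_{v\in G}\sum_{v\in nodes(w)}\frac{1-\varepsilon}{\delta_w+1},
\]
which is the claimed value. The statement holds with high probability because it inherits the high-probability color-fraction guarantee of Kuhn's algorithm.
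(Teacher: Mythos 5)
Your proposal is correct and follows essentially the same route as the paper's proof: invoke the independent-graph-scheduling bound $\min_i d_i/t$ from the cited work, identify the $t=k$ color classes of Kuhn's multicoloring on $G_\rho$ as the independent sub-graphs, and sum the per-clique color fractions over all vertices $w$ with $v\in nodes(w)$. Your explicit justification that the per-clique contributions do not double-count (because any two vertices sharing a node are adjacent in $G_\rho$ and hence receive disjoint color sets) is a detail the paper leaves implicit, and it is a welcome addition.
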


\begin{proof}
In the case of the aggressive algorithm, each user can be represented in more than one sub-network.  Each sub-network $w$ will be active a total of $\left(\frac{1-\varepsilon}{\delta_w +1} \right)k$ time-slots.  Hence, the number of time-slots each user will be active is the sum of all the time-slots the sub-networks to which it belongs are active, in other words,

\begin{equation}
d_i = \sum_{i \in w} \left(\frac{1-\varepsilon}{\delta_w +1} \right)k.
\end{equation}
The worst-case node in terms of active time-slots gives us the normalized sum-rate of the network:

\begin{align}
\myalpha_2(\rho) &= \min_{i \in 1, 2, ... N} \frac{d_i}{t} \\
&= \min_{v \in G} \frac{\sum_{v \in w} \left(\frac{1-\varepsilon}{\delta_w +1} \right)k}{k} \\
&= \min_{v \in G} \sum_{v \in nodes(w)} \frac{1-\varepsilon}{\delta_w +1}. 
\end{align}
\end{proof}

Given the two evaluations of the generalized sum-rate of the proposed algorithms, we note that the main characteristic of the conservative algorithm is that the normalized sum-rate is ensured to be greater than or equal than the normalized sum-rate achieved by local multicoloring of the original network, $G$.  The guarantee provided by the conservative algorithm provides a proven improvement that leverages local information when $\rho \geq 1$.  This contribution is highlighted in the following theorem.

\begin{theorem}\emph{(Conservative algorithm guarantee)}
Let $\myalpha_1(\rho)$ be the normalized sum-rate of a network after applying Algorithm $\mathcal A_1(3\rho+3)$ to the original graph $G$.  If $\alpha(0)$ is the normalized sum-rate achieved by distributed multicoloring of the original network, $G$, then $\alpha(0)\leq \alpha_1(\rho)$, for $\rho \geq 1$.  The proof of this theorem is presented in \cite{santacruz:2013}.
\end{theorem}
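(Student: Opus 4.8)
The plan is to reduce the stated inequality to a single comparison of graph degrees and then invoke the monotonicity of Kuhn's guarantee. Observe that both quantities share the same functional form: by Theorem~1 the conservative algorithm attains $\alpha_1(\rho) = (1-\varepsilon)/(\Delta_{G_\rho}+1)$, while running the identical scheduling procedure with $\rho = 0$---in which every node is its own $0$-clique and $G_0$ coincides with $G$---amounts to applying Kuhn's multicoloring directly to the original conflict graph, giving $\alpha(0) = (1-\varepsilon)/(\Delta_G + 1)$, where $\Delta_G$ is the maximum degree of $G$. Since $\Delta \mapsto (1-\varepsilon)/(\Delta+1)$ is strictly decreasing, the desired inequality $\alpha(0) \le \alpha_1(\rho)$ is equivalent to the purely combinatorial statement $\Delta_{G_\rho} \le \Delta_G$.

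First I would establish this degree inequality using the two defining properties of the conservative selection algorithm stated in Section~\ref{sec:Conservative}. Property~1 ensures that each user $v$ of $G$ maps to a unique vertex $w(v)$ of $G_\rho$, so that every vertex of $G_\rho$ represents at least one original node. Property~2 ensures that the degree of $w(v)$ in $G_\rho$ is at most $\delta_v$, the degree of $v$ in $G$. Hence for an arbitrary vertex $w$ of $G_\rho$, choosing any $v \in nodes(w)$ gives $\delta_w \le \delta_v \le \Delta_G$; taking the maximum over all vertices of $G_\rho$ yields $\Delta_{G_\rho} \le \Delta_G$. Substituting into the two monotone expressions then gives $\alpha_1(\rho) = (1-\varepsilon)/(\Delta_{G_\rho}+1) \ge (1-\varepsilon)/(\Delta_G+1) = \alpha(0)$ for every $\rho \ge 1$, which is the claim.

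The genuinely hard part---and the reason the complete argument is deferred to \cite{santacruz:2013}---is not this final chain of inequalities but the verification of Property~2 itself: that the conservative selection can always consolidate the nodes of $G$ into cliques of diameter at most $\rho$ so that no cluster-vertex inherits a larger interference footprint than the smallest footprint among its members. The subtlety is that a clique's neighborhood in $G_\rho$ is the union of its members' external neighborhoods, which could a priori exceed any single $\delta_v$; the conservative pruning rule is engineered to discard precisely those candidate cliques whose consolidation would violate the bound, so that Property~2 holds by construction. The real combinatorial content is therefore to show that, after this discarding, enough cliques survive to keep every node represented (preserving Property~1), and that both properties can be certified by each node using only the assumed $3\rho+3$ hops of connectivity information, so that the distributed execution is consistent with the centralized one.
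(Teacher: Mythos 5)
Your argument is correct and follows exactly the route the paper intends: combining Theorem~2 with the two defining properties of the conservative selection reduces the claim to the degree comparison $\Delta_{G_\rho}\le\Delta_G$, which is immediate since every vertex $w$ of $G_\rho$ contains some node $v$ with $\delta_w\le\delta_v\le\Delta_G$, and monotonicity of $\Delta\mapsto(1-\varepsilon)/(\Delta+1)$ finishes it. The paper itself gives no proof here (it defers entirely to \cite{santacruz:2013}), and you correctly locate the genuinely nontrivial content — that the pruning preserves both properties and can be executed consistently with $3\rho+3$ hops — as exactly what that reference must supply.
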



We compare the conservative algorithm's performance to the distributed multicoloring algorithm of the original graph to highlight the advantages of leveraging local information.  The distributed multicoloring algorithm serves as a reasonable baseline of performance for one-shot algorithms.  In contrast, other algorithms such as distributed greedy scheduling \cite{wu:2007} or randomized maximal schedulers \cite{modiano:2006} consist of rounds of exchanges to make decisions.  By making our algorithms be one-shot algorithms, we ensure that the amount of knowledge required is constrained to a limited number of hops.  This quantifiable guarantee cannot be made under algorithms that involve several rounds as knowledge about the network propagates with each round.  We address the performance, overhead, and complexity of several algorithms more in detail in the next few subsections.

\subsection{Numerical Results: Normalized Sum-Rate}
We present numerical results that compare the performance of the Conservative and the Aggressive Selection Algorithms.  First, we present the performance of both algorithms in two example graphs for several values of the parameter $\rho = 0, 1, 2, 3$.  In these results, $\rho = 0$ reflects the case when there is no topology information and Kuhn's multicoloring algorithm is performed directly on the conflict graph, $G$.  The two sample graphs being compared are the $N$-node line-clique, which has been presented as an example throughout this paper, and the $N$-node line-star graph shown in Figure \ref{figs/linestar}.  
\begin{figure}[h]
  \centering
  \includegraphics[width=.2\textwidth]{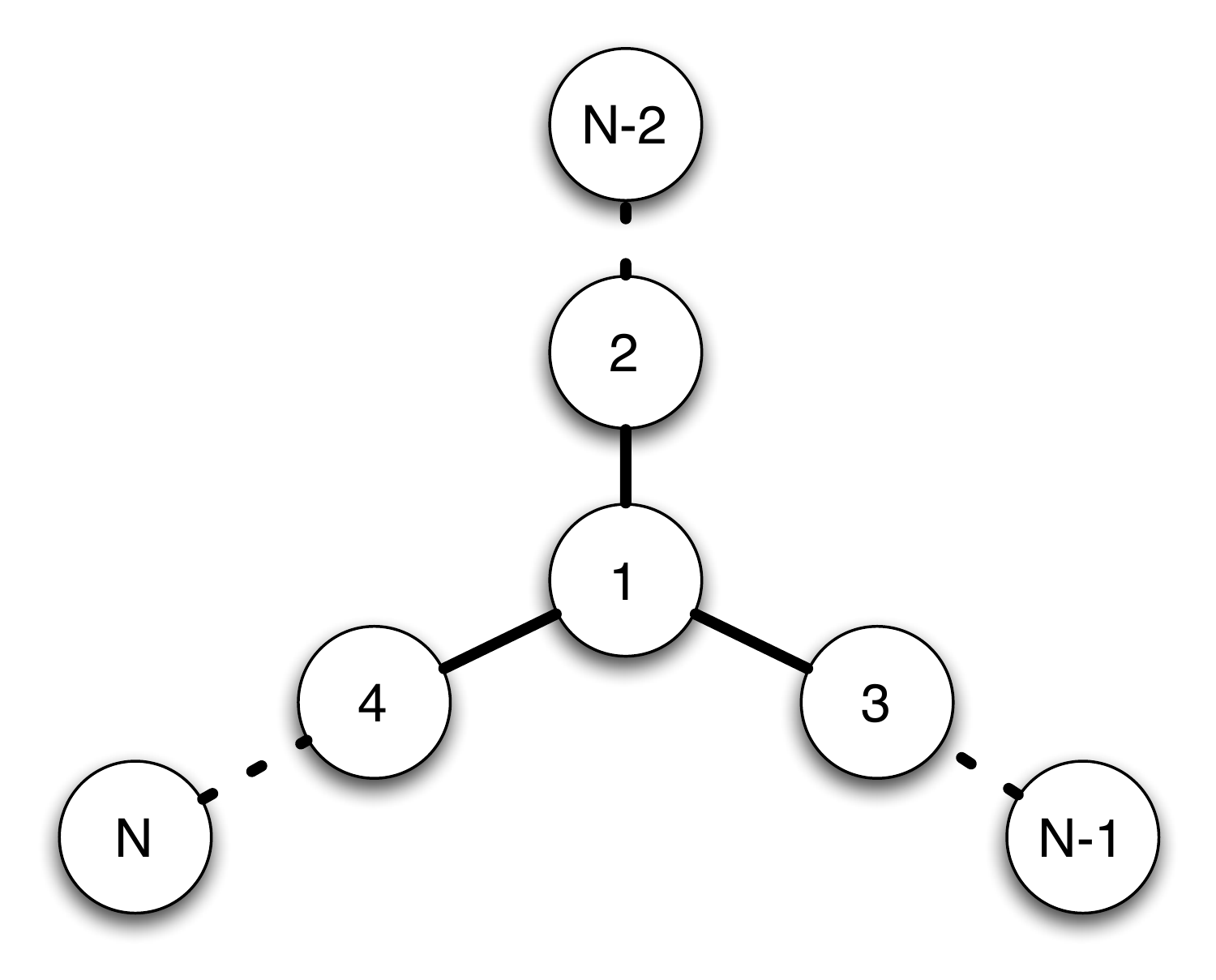}
  \caption{$N$-node Line-star Graph}
  \label{figs/linestar}
\end{figure}

The results shown in Figure \ref{figs/exgraphs} show that in both of these example graphs, the Aggressive Selection Algorithm outperforms the Conservative Selection Algorithm and the gain increases as the diameter of the cliques being formed increases.  The results on these sample graphs expose some of the limitations of the conservative algorithm, namely, the need for a \emph{unique} maximum $\rho$-clique to exist in order to form cliques.  In highly symmetrical graph such as the ones in these examples, the conservative algorithm provides marginal gains.  On the other hand, it is precisely in these situations where the aggressive algorithm displays its strengths.

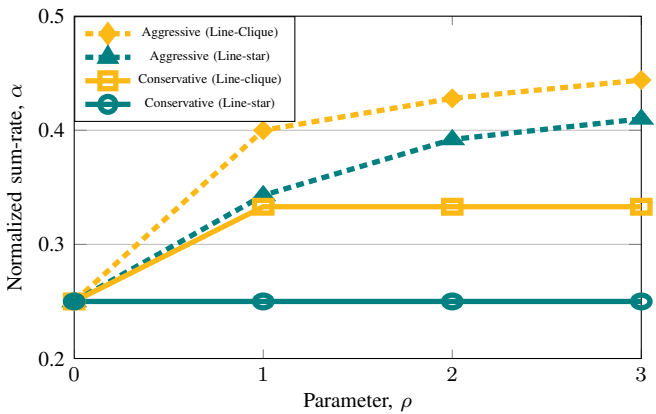
\begin{figure}
\centering
\begin{tikzpicture}[scale=1, font= \footnotesize]
	\begin{axis}[
		ymajorgrids,
		ylabel={Normalized sum-rate, $\alpha$},
		ylabel shift=-3pt,
		yscale=0.8,
		xscale=1.1,
		xtick={0,1,2,3},
		xlabel={Parameter, $\rho$},
		xmin = 0, xmax = 3, ymin = 0.2, ymax = 0.5,
		ytick={.2,.3,.4,.5},
		yticklabels={\footnotesize 0.2,\footnotesize 0.3,\footnotesize 0.4, \footnotesize 0.5},
		legend style = {font = \tiny,at={(0,1.25)}, anchor = north west},
		tick align = inside,
	]

	\addplot[yellow!70!red, mark = diamond*, mark size = 3pt, line width= 2pt, densely dashed, mark options = solid]
	coordinates {
	(0, 0.25)
	(1, 0.4)
	(2, 0.428)
	(3, 0.444)
	};

	\addplot[teal, mark = triangle*, mark size = 3pt, line width= 2pt, densely dashed, mark options = solid]
	coordinates {
	(0, 0.25)
	(1, 0.343)
	(2, 0.392)
	(3, 0.41)
	};
	
	\addplot[yellow!70!red, mark = square, mark size = 3pt, line width= 2pt]
	coordinates {
	(0, 0.25)
	(1, 0.333)
	(2, 0.333)
	(3, 0.333)
	};

	\addplot[teal, mark = o, mark size = 3pt, line width= 2pt]
	coordinates {
	(0,.25)
	(1,.25)
	(2,.25)
	(3,.25)
	};
	
	\legend{Aggressive (Line-Clique), Aggressive (Line-star), Conservative (Line-clique), Conservative (Line-star)}

	\end{axis}

\end{tikzpicture}

\caption{Conservative vs. Aggressive - Sample Graphs}
\label{figs/exgraphs}
\end{figure}


Given that our work builds on the difficulty of obtaining global information, we are especially concerned with small amounts of local information.  We are also interested in the algorithms' performance for classes of graphs that are representative of wireless network scenarios.  We present a comparison between the normalized sum-rate performance of four different algorithms for different classes of graphs and with parameter $\rho =1$ for the conservative and aggressive algorithms.  The algorithms selected for comparison are distributed coloring (DC), greedy scheduling (maximal scheduler, MS), our previous conservative algorithm (Con), and our aggressive algorithm (Agg).  The greedy distributed scheduling algorithm that produces maximal schedules is described as follows:
\begin{enumerate}
\item Assign a randomized ordering to the nodes in the network
\item Following the assigned order, a node is added to the schedule if it has packets to send and none of its interfering nodes have been scheduled
\end{enumerate}
Note that the greedy algorithm described here requires full network information.  There are distributed implementations of similar greedy scheduling algorithms that require rounds of communication with neighbors in the network.

The simulations are presented for three different classes of graphs.  We first look at Random Graphs, $\mathcal G(n, p)$, with $n$ nodes and edge probability $p$ in three regimes, $p = 0.1$ (low connectivity), $p=0.5$ (medium connectivity), and $p = 0.9$ (high connectivity).  Then we simulate algorithm performance in random scale-free graphs generated using the B-A algorithm~\cite{albert:2002}.  The degree distribution of scale-free graphs follows a power scale law and is a good representation of sensor networks.  The third class of graphs is random geometric graphs, in which $n$ transmitter-receiver pairs are randomly placed with uniform distribution in a unit square and interference occurs if any transmitter is within a diameter $d$ of a receiver from another user.

To evaluate algorithm performance, for each class of graphs and each parameter setting 100 independent random graphs are generated and the average normalized sum-rate of each algorithm is reported.  Figure \ref{figs/randomalphacomp} shows the performance comparison of random graphs with parameters $\mathcal G(20, 0.1)$, $\mathcal G(20, 0.5)$, and $\mathcal G(20, 0.9)$, scale-free graphs with 100 users, and geometric graphs with 20 users for parameters $d = 0.25$ and $d = 0.5$.  

\begin{figure}[ht]
\begin{tikzpicture}[scale = 1, font= \footnotesize]
	\begin{axis}[ylabel={Normalized sum-rate, $\alpha$ (avg)},
			ylabel shift=-3pt,
			ybar, 
			yscale=.9,
			bar width = 5pt,
			xtick={1,2,3,4,5,6},
			xticklabels={$p=0.1$, $p=0.5$,  $p=0.9$, Scale-free, $d=0.25$, $d=0.5$},
			xticklabel style={font=\tiny},
			xlabel={Graph type},
			xscale=1.1,
			ymin = 0, ymax = 0.4,
			ytick={0,.1,.2,.3,.4},
			legend style={at={(0.5,-0.25)}, anchor=north,legend columns=4}
]

\addplot+[teal,
		draw = black, 
  		error bars/.cd,
    		y dir=both,
    		y explicit,
    		error bar style={line width=.75pt},
    		error mark options={rotate=90, gray, mark size=3pt, line width=.75pt}
		]		
coordinates {
(1,0.1843) +- (1,0.0)
(2,0.07) +- (2, 0.0)
(3,0.0542) +- (3, 0.0)
(4,0.0585) +- (4,0.0)
(5,0.1221) +- (5, 0.0)
(6,0.0561) +- (6, 0.0)
};

\addplot+[yellow!70!red,
		draw = black,
  		error bars/.cd,
    		y dir=both,
    		y explicit,
    		error bar style={line width=.75pt},
    		error mark options={
      		rotate=90,
      		gray,
      		mark size=3pt,
      		line width=.75pt
    		}
		]		
coordinates {
(1, 0.217) +- (1, 0)
(2, 0.0954) +- (2, 0.0)
(3, 0.065) +- (3, 0.0)
(4, 0.0635) +- (4, 0)
(5, 0.1361) +- (5, 0.0)
(6, 0.0586) +- (6, 0.0)
};

\addplot+[blue!70!red,
		draw = black,
  		error bars/.cd,
    		y dir=both,
    		y explicit,
    		error bar style={line width=.75pt},
    		error mark options={
      		rotate=90,
      		gray,
      		mark size=3pt,
      		line width=.75pt
    		}
		]		
coordinates {
(1, 0.1846) +- (1, 0)
(2, 0.071) +- (2, 0.0)
(3, 0.081) +- (3, 0.0)
(4, 0.0591) +- (4, 0)
(5, 0.1289) +- (5, 0.0)
(6, 0.0943) +- (6, 0.0)
};

\addplot+[pattern color = red,
		draw = black,
		pattern= north east lines,
  		error bars/.cd,
    		y dir=both,
    		y explicit,
    		error bar style={line width=.75pt},
    		error mark options={
      		rotate=90,
      		gray,
      		mark size=3pt,
      		line width=.75pt
    		}
		]		
coordinates {
(1, 0.3841) +- (1, 0)
(2, 0.2957) +- (2, 0.0)
(3, 0.3154) +- (3, 0.0)
(4, 0.1307) +- (4, 0)
(5, 0.3205) +- (5, 0.0)
(6, 0.2909) +- (6, 0.0)
};
\legend{DC, MS, Con, Agg}

\end{axis}

\end{tikzpicture}

\caption{Algorithm performance comparison for random, scale-free, and geometric graphs}
\label{figs/randomalphacomp}
\end{figure}
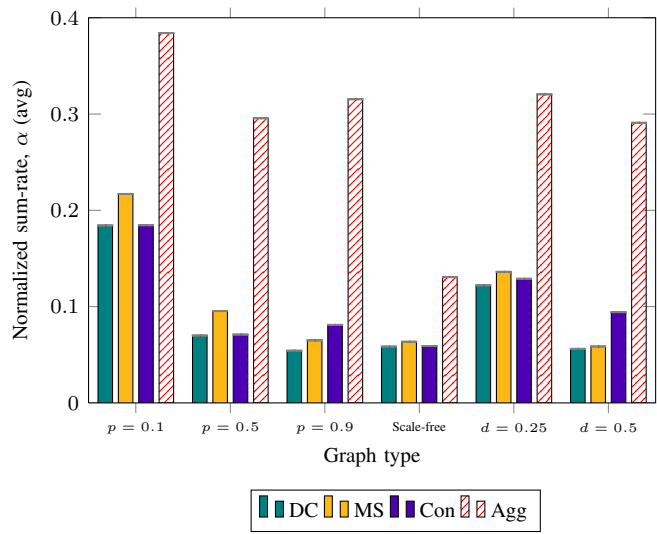

The aggressive algorithm outperforms the distributed coloring and greedy (maximal schedule) algorithms in all this cases.  Some notable points about the results include the performance of the conservative algorithm in the highly connected random graphs ($p = 0.9$) where cliques with large number of nodes are readily present and the conservative algorithm is able to outperform distributed coloring and maximal scheduling.  Also, note the performance in the case of scale-free graphs, where the conservative algorithm could often not ensure gains and so it remained conservative, and close in performance to distributed coloring, while the aggressive algorithm formed cliques for gains in normalized sum-rate.

%
%
%
%
%
%



\subsection{Net Sum-rate Comparisons}

We have introduced two distributed sub-network scheduling algorithms and analyzed their normalized sum-rate performance.  Performance in terms of normalized sum-rate is an important result because it represents the guaranteed fraction of what could be achieved with the optimal strategy and full knowledge.  Nevertheless, since it provides a guarantee, the normalized sum-rate characterizes worst-case performance behavior which could be significantly different from average performance in terms of other metrics of performance, for example \emph{net} sum-rate.  We simulate and analyze the performance of the conservative and aggressive algorithms and compare them to distributed coloring and maximal scheduling in terms of net sum-rate. 

Net sum-rate describes network performance without taking into account what the optimal, full-knowledge capacity of the network is.  While net sum-rate provides a practical measure of net throughput performance, characterization for arbitrary networks is unavailable due to the open problem of characterizing the capacity of the general interference channel.  The achievable sum rate within each one of the sub-networks scheduled after the application of the algorithms depends on the interference properties of the network.  In order to compute the achievable net sum-rate we must know the interference channel gains and the scheme being employed by each sub-network attempting to use advanced physical layer strategies.  For example, in the low interference regime, we can assume treating interference as noise while in the high interference regime we can assume interference cancellation.  In general, the achievable sum-rate for a set of interfering users is not known.  Nevertheless, within each one of the selected sub-networks, the achievable rate by each user can be lower and upper bounded.  Assume each user $i$ has a capacity $C_i$ when no interference is present.  In the worst case scenario, each member within each sub-network has to time-share the medium with the other users in the sub-network.  In this case, a user $i$ in a sub-network with $m$ users will achieve rate $C_i/m$ every time the sub-network is active, in average.  In the best case scenario, all interference is manageable and each user $i$ achieves $C_i$ every time the sub-network is active. 

Just as in the case of normalized sum-rate, we present net sum-rate performance comparison for simulation in three classes of graphs: random graphs, scale-free graphs, and geometric graphs.  In this sub-section, we report net sum-rate performance for four algorithms: distributed coloring (DC), greedy distributed algorithm (maximal schedule, MS), bounds of our conservative algorithm (Con), and bounds of our aggressive algorithm (Agg), both with $\rho = 1$.

Figure \ref{figs/rg_p01_net} shows the net sum-rate performance of random graph with parameters $\mathcal G(n, 0.1)$ and Figure \ref{figs/rg_p09_net} for graphs with $\mathcal G(n, 0.9)$.  In Figure \ref{figs/scalefreeresults_net}, we report the net sum-rate achievable for the class of scale-free graphs with 25, 50, and 100 users.  Finally, in Figure \ref{figs/geo_d025_net} we see the performance comparison in the class of geometric graphs with interference diameter $d = 0.25$.
	
\begin{figure}[ht]
\centering
\begin{tikzpicture}[scale=1]
	\begin{axis}[
		xscale=1.1,
		xtick={1,2,3},
		ymajorgrids,
		yscale=.8,
		ylabel={Net sum-rate},
		ylabel shift=-3pt,
		xticklabels={\footnotesize 5,\footnotesize 10,\footnotesize 20},
		xlabel={Number of users},
		ymin = 2.00, ymax = 12.00,
		ytick={2,4,6,8, 10, 12},
		legend style = {font = \footnotesize,at={(.025,1.25)}, anchor = north west},
		tick align = inside,
	]
		
	\addplot[teal, mark = diamond*, mark size = 3pt, line width= 1pt]
	coordinates {
	(1, 2.5450)
	(2, 6.6360)
	(3, 8.7922)
	};

	\addplot[yellow!70!red, mark = triangle*, mark size = 3pt, line width= 1pt]
	coordinates {
	(1, 2.6375)
	(2, 6.9216)
	(3, 11.027)
	};
	
	\addplot[blue!70!red, mark = square, mark size = 3pt, line width= 1pt]
	coordinates {
	(1, 2.4)
	(2, 5.3919)
	(3, 6.4599)
	};
	
	\addplot[blue!70!red, mark = square, mark size = 3pt, line width= 1pt, densely dashed, mark options = solid]
	coordinates {
	(1, 3.0733)
	(2, 6.4733)
	(3, 7.6010)
	};

	\addplot[red, mark = o, mark size = 3pt, line width= 1pt]
	coordinates {
	(1,2.5175)
	(2,6.2714)
	(3,8.5888)
	};
	
	\addplot[red, mark = o, mark size = 3pt, line width= 1pt, densely dashed, mark options = solid]
	coordinates {
	(1,2.8280)
	(2,7.2322)
	(3,11.1336)
	};
	
	\legend{DC, MS, Con (Lower), Con (Upper), Agg (Lower), Agg (Upper)}
	\end{axis}
\end{tikzpicture}

\caption{Net sum-rate performance comparison in random graphs with low connectivity, $p = 0.1$}
\label{figs/rg_p01_net}
\end{figure}
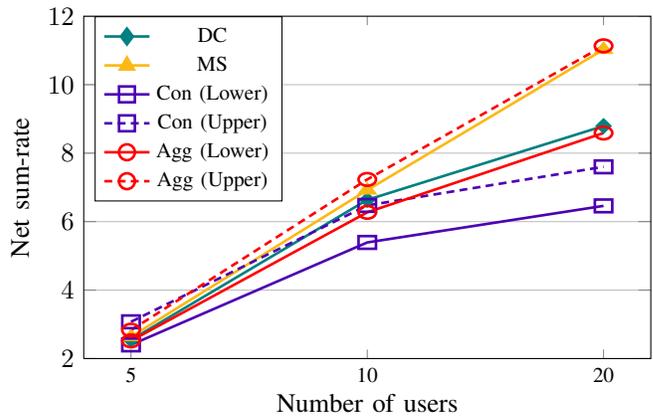
\begin{figure}[ht]
\centering
\begin{tikzpicture}[scale=1]
	\begin{axis}[
		xscale=1.1,
		xtick={1,2,3},
		ymajorgrids,
		yscale=0.8,
		ylabel={Net sum-rate},
		ylabel shift=-3pt,
		xticklabels={\footnotesize 5,\footnotesize 10,\footnotesize 20},
		xlabel={Number of users},
		ymin = .9, ymax = 10.00,
		ytick={2,4,6,8, 10},
		legend style = {font = \footnotesize,at={(.025,1.25)}, anchor = north west},
		tick align = inside,
	]
		
	\addplot[teal, mark = diamond*, mark size = 3pt, line width= 1pt]
	coordinates {
	(1, 1.0912)
	(2, 1.1176)
	(3, 1.3215)
	};

	\addplot[yellow!70!red, mark = triangle*, mark size = 3pt, line width= 1pt]
	coordinates {
	(1, 1.2875)
	(2, 1.6804)
	(3, 2.6730)
	};
	
	\addplot[blue!70!red, mark = square, mark size = 3pt, line width= 1pt]
	coordinates {
	(1, 0.9989)
	(2, 1.0533)
	(3, 1.2655)
	};
	
	\addplot[blue!70!red, mark = square, mark size = 3pt, line width= 1pt, densely dashed, mark options = solid]
	coordinates {
	(1, 3.5750)
	(2, 2.8100)
	(3, 1.8822)
	};

	\addplot[red, mark = o, mark size = 3pt, line width= 1pt]
	coordinates {
	(1, 1.4170)
	(2, 2.5112)
	(3, 4.9206)
	};
	
	\addplot[red, mark = o, mark size = 3pt, line width= 1pt, densely dashed, mark options = solid]
	coordinates {
	(1, 2.3006)
	(2, 4.4935)
	(3, 9.2197)
	};
	
	\legend{DC, MS, Con (Lower), Con (Upper), Agg (Lower), Agg (Upper)}
	\end{axis}
\end{tikzpicture}

\caption{Net sum-rate performance comparison in random graphs with high connectivity, $p = 0.9$}
\label{figs/rg_p09_net}
\end{figure}
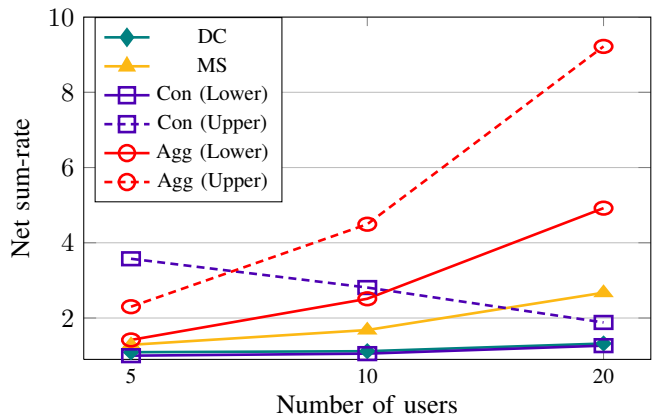
\begin{figure}[ht]
\centering
\begin{tikzpicture}[scale=1]
	\begin{axis}[
		xscale=1.1,
		xtick={1,2,3},
		ymajorgrids,
		yscale=0.8,
		ylabel={Net sum-rate},
		ylabel shift=-3pt,
		xticklabels={\footnotesize 25,\footnotesize 50,\footnotesize 100},
		xlabel={Number of users},
		ymin = 7, ymax = 65.00,
		ytick={10,20,30,40, 50, 60},
		legend style = {font = \footnotesize,at={(.025,1.25)}, anchor = north west},
		tick align = inside,
	]
		
	\addplot[teal, mark = diamond*, mark size = 3pt, line width= 1pt]
	coordinates {
	(1, 10.2218)
	(2, 20.7273)
	(3, 41.6845)
	};

	\addplot[yellow!70!red, mark = triangle*, mark size = 3pt, line width= 1pt]
	coordinates {
	(1, 15.0735)
	(2, 31.0506)
	(3, 62.9975)
	};
	
	\addplot[blue!70!red, mark = square, mark size = 3pt, line width= 1pt]
	coordinates {
	(1, 7.5939)
	(2, 14.5970)
	(3, 29.8690)
	};
	
	\addplot[blue!70!red, mark = square, mark size = 3pt, line width= 1pt, densely dashed, mark options = solid]
	coordinates {
	(1, 8.5491)
	(2, 16.5199)
	(3, 32.7765)
	};

	\addplot[red, mark = o, mark size = 3pt, line width= 1pt]
	coordinates {
	(1, 7.5939)
	(2, 17.7417)
	(3, 35.2784)
	};
	
	\addplot[red, mark = o, mark size = 3pt, line width= 1pt, densely dashed, mark options = solid]
	coordinates {
	(1, 11.8786)
	(2, 23.3875)
	(3, 46.2151)
	};
	
	\legend{DC, MS, Con (Lower), Con (Upper), Agg (Lower), Agg (Upper)}
	\end{axis}
\end{tikzpicture}

\caption{Net sum-rate performance comparison in scale-free graphs}
\label{figs/scalefreeresults_net}

\end{figure}
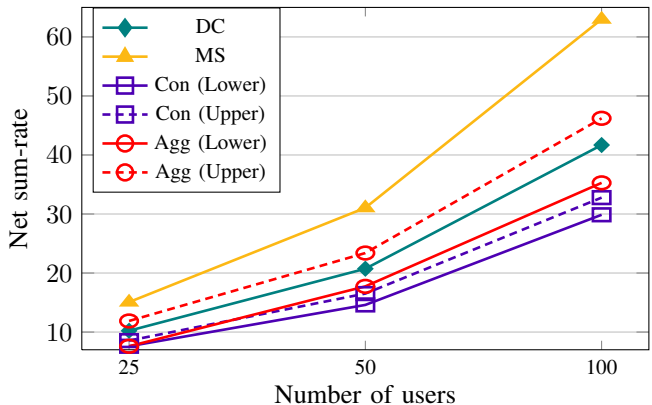
\begin{figure}[ht]
\centering
\begin{tikzpicture}[scale=1]
	\begin{axis}[
		xscale=1.1,
		xtick={1,2,3},
		ymajorgrids,
		yscale=0.8,
		ylabel={Net sum-rate},
		ylabel shift=-3pt,
		xticklabels={\footnotesize 10,\footnotesize 20,\footnotesize 30},
		xlabel={Number of users},
		ymin = 3, ymax = 14,
		ytick={3,6,9,12},
		legend style = {font = \footnotesize,at={(.025,1.25)}, anchor = north west},
		tick align = inside,
	]
		
	\addplot[teal, mark = diamond*, mark size = 3pt, line width= 1pt]
	coordinates {
	(1, 4.2115)
	(2, 4.9349)
	(3, 4.8420)
	};

	\addplot[yellow!70!red, mark = triangle*, mark size = 3pt, line width= 1pt]
	coordinates {
	(1, 4.8491)
	(2, 6.5589)
	(3, 7.1468)
	};
	
	\addplot[blue!70!red, mark = square, mark size = 3pt, line width= 1pt]
	coordinates {
	(1, 3.2104)
	(2, 3.4063)
	(3, 3.5068)
	};
	
	\addplot[blue!70!red, mark = square, mark size = 3pt, line width= 1pt, densely dashed, mark options = solid]
	coordinates {
	(1, 4.4835)
	(2, 4.7964)
	(3, 4.5393)
	};

	\addplot[red, mark = o, mark size = 3pt, line width= 1pt]
	coordinates {
	(1, 4.3019)
	(2, 6.0155)
	(3, 7.7804)
	};
	
	\addplot[red, mark = o, mark size = 3pt, line width= 1pt, densely dashed, mark options = solid]
	coordinates {
	(1, 5.7032)
	(2, 9.3040)
	(3, 13.1571)
	};
	
	\legend{DC, MS, Con (Lower), Con (Upper), Agg (Lower), Agg (Upper)}
	\end{axis}
\end{tikzpicture}

\caption{Net sum-rate performance comparison in geometric graphs, $d = 0.25$}
\vspace{-2mm}
\label{figs/geo_d025_net}
\end{figure}
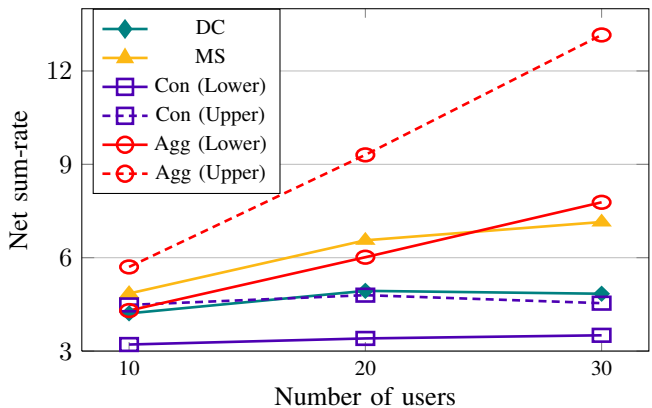
%
%
%
%
%

The results in Figures \ref{figs/rg_p01_net} - \ref{figs/geo_d025_net} show that in terms of net sum-rate, the aggressive algorithm outperforms the conservative algorithm in the majority of the cases.  This trend agrees with the performance in terms of normalized sum-rate.  An interesting observation from the numerical results is that the upper bound on the sum-rate achievable by our aggressive algorithm is consistently higher than the sum-rate achieved by distributed scheduling and the maximal scheduler for the simulated random graphs and geometric graphs.  This validates the assertion that when interference is manageable, our aggressive algorithm can perform better in terms of net sum-rate.  The only case where neither of our algorithms could surpass maximal scheduling was the class of scale-free graphs.  The structure of scale-free networks is particularly unfavorable to the aggressive algorithm since the spoke-hub nature of the graphs results in large degree increase when $\rho = 1$.  Nevertheless, in scale-free graphs, increasing the diameter of the cliques being formed to $\rho = 2$ can produce significant improvement since large sections of each spoke-hub are identified as a single sub-network.  The numerical results for net sum-rate performance highlight the advantages of the aggressive algorithm because, with only $\rho = 1$, it is able to outperform maximal scheduling in random and geometric graphs.  In cases where higher net sum-rate is not achieved, such as in scale-free graphs, it has the flexibility to leverage more knowledge (e.g., let $\rho = 2$) and improve performance.

\subsection{Remarks on Overhead}

The key feature of the algorithms we have proposed in our previous work and in this paper is that they can be executed with $\rho+1$ hops of channel information and $3\rho+3$ hops of connectivity information for the conservative algorithm or $3\rho +1$ hops of connectivity information for the aggressive algorithm.  In order to be able to make comparisons with other distributed scheduling algorithms, it is important to understand what is the overhead to required obtain this amount of information.  

The total amount of overhead required for execution depends on how often the system needs to renew its knowledge.  Consider three different time-scales in the life of the network.  First, we define \emph{topology coherence time}, $T_{topo}$, as the amount of time the network remains static in terms of topology.  In other words, the amount of time the network is correctly described by the graph $G$.  Second, the amount of time the channel remains constant before changing is denoted \emph{channel coherence time}, $T_{channel}$.  Finally, the amount of time required for a single communication transmission is denoted \emph{data transmission time}, $T_{data}$.  The length of each one of these time-scales and their ratios depend on network properties such as mobility, fading environments, and packet lengths.  Due to the nature of wireless networks and their physical properties, it is reasonable to assume that in most cases $T_{topo} \geq T_{channel} \geq T_{data}$.

When we describe our proposed algorithms as one-shot algorithms, we are referring to the fact that they need to exchange information only after some time $T_{topo}$ or $T_{channel}$.  This important feature contrasts with a large section of distributed link scheduling algorithms which require several rounds of message passing or dedicated control sub-frames each time a data communication is established, i.e., every $T_{data}$.  The distributed link scheduling algorithms described in Section \ref{sec:Related} require rounds of message passing per data transmission time because they are based on queue state information.  The queue state information of each user is updated after each data transmission and must be communicated every $T_{data}$.  While some of these algorithms only require communication from each node with neighbors only one hop away, they often require a large amount of rounds per $T_{data}$.  A sequence of message exchanges implicitly propagates information beyond one hop.

Direct overhead comparisons between the proposed algorithms and state-of-the-art distributed scheduling algorithms are non-trivial.  The primary objective of the proposed algorithms is a reduction in the number of hops of network information required for algorithm execution.  In other words, we are concerned with information locality.  On the other hand, distributed scheduling algorithms have their main priority set on reducing computational complexity \cite{modiano:2006, sanghavi:2007, sharma:2006, wu:2007, lin:2006, gupta:2009}.  Because the emphasis of efficiency is different in both scenarios, the relative cost of overhead also changes.  Therefore, while several rounds of message passing with neighbors is typical in distributed scheduling algorithms and considered low overhead, in terms of information locality, it can represent significant overhead since each round of message passing can implicitly provide an extra hop of network information.

\section{Conclusion}
\label{sec:Conclusion}

The work presented is motivated by the original question of how and when users in a wireless network should transmit if only local information is available.  To take steps towards solving this challenging problem, we have developed a distributed algorithm that use only local connectivity information to coordinate sub-networks that have enough channel information to communicate in an optimal manner.  The key idea behind the algorithm presented is to identify independent sub-graphs such that, at each time slot, the active users in the network have all the information required to use physical layers beyond interference avoidance.  

The previously proposed conservative algorithm is guaranteed to have a normalized sum-rate performance that is greater than or equal to distributed graph coloring.  The aggressive algorithm introduced in this work shows significant improvement for important graph classes over distributed graph coloring, maximal scheduling, and the conservative algorithm.  These two algorithms are constructive and provide more feasible schemes that can leverage realistic assumptions of local information.  Furthermore, we have shown the performance of our algorithms in terms of net sum-rate as well.  Although there are topologies for which maximal scheduling might provide higher net sum-rate, in general, the interference-avoidance approach is not always optimal and our aggressive algorithm can produce significant gains in topologies where local information can be leveraged, such as random graphs and geometric graphs.  An implication of the results presented in this paper is that the algorithms proposed provide a lower bound on the normalized sum-capacity with local information.  



%

%
%
%
%
%
%

\ifCLASSOPTIONcaptionsoff
  \newpage
\fi



%
\IEEEpeerreviewmaketitle

\bibliographystyle{IEEEtran}
\bibliography{IEEEabrv,references}

%








\end{document}